\newtheorem{definition}{\bf Definition}[section]
\newtheorem{theorem}{\bf Theorem}[section]
\newtheorem{assumption}{\bf Assumption}[section]
\newtheorem{lemma}{\bf Lemma}[section]
\newcommand{\thb}[0]{\operatorname{TH}}
\newcommand{\qstab}[0]{\operatorname{QSTAB}}
\newcommand{\abl}[0]{\operatorname{abl}}
\newcommand{\id}[0]{\mathbb{I}}
\begin{document}


\title{The problem of quantum correlations and the totalitarian principle}


\author{Ad\'an Cabello$^{1,2}$}

\address{$^{1}$Departamento de F\'{\i}sica Aplicada II, Universidad de
Sevilla, E-41012 Sevilla, Spain \\$^{2}$Instituto Carlos~I de F\'{\i}sica Te\'orica y Computacional, Universidad de
Sevilla, E-41012 Sevilla, Spain}


\subject{quantum physics}

\keywords{non-locality, contextuality, quantum correlations, Bell's inequalities, Kochen-Specker theorem, foundations of quantum theory}

\corres{Ad\'an Cabello\\
\email{adan@us.es}}


\begin{abstract}
The totalitarian principle establishes that `anything not forbidden is compulsory'. The problem of quantum correlations is explaining what selects the set of quantum correlations for a Bell and Kochen-Specker (KS) contextuality scenario. Here, we show that two assumptions and a version of the totalitarian principle lead to the quantum correlations. The assumptions are that there is a non-empty set of correlations for any KS contextuality scenario and a statistically independent realisation of any two KS experiments. The version of the totalitarian principle says that any correlation not forbidden by these assumptions can be produced. This paper contains a short version of the proof [presented in {\em Phys. Rev. A} \textbf{100}, 032120 (2019)] and explores some implications of the result.

This article is part of the theme issue `Contextuality and probability in quantum mechanics and beyond'.
\end{abstract}


\maketitle




\section{Introduction}


John Wheeler conjectured that the universe is not `a machine governed by some magic equation', but `a self-synthesizing system', \cite{Wheeler88} in which `[e]verything is built on the unpredictable outcomes of billions upon billions of elementary (\ldots) phenomena' \cite{Wheeler83a}, that are, themselves, `lawless events' \cite{MTZ09}. It is difficult to picture a universe like that. Nevertheless, an interesting intellectual exercise is identifying signatures of such a universe and looking for them in our universe. A possible signature could be that Gell-Mann's totalitarian principle \cite{Gell-Mann56} holds in it. The totalitarian principle states: `anything not forbidden is compulsory'. Of course, we have to clarify what is `anything' and what makes anything `forbidden'.

In this paper we will assume that we are in a Wheelerian lawless universe and we ask ourselves what kind of non-locality and contextuality would be observed in such a universe. Our aim is to address a famous open problem in foundations of physics: identifying the principle that selects the quantum correlations for Bell \cite{Bell64,CHSH69} and Kochen-Specker (KS) contextuality \cite{Specker60,Bell66,KS67} scenarios. 

We assume that the reader has some previous knowledge of Bell non-locality and KS contextuality. However, we will start by reviewing all the concepts needed. In Sec.~\ref{sec:problem}, we define Bell and KS contextuality scenarios, state the problem, and recall the `solution' quantum theory gives. In Sec.~\ref{sec:result}, we present our assumptions and the main result. In Sec.~\ref{sec:proof}, we provide a short proof (an extended version can be found in Ref.~\cite{Cabello18}). In Sec.~\ref{sec:conclusions} we summarize the result. In Sec.~\ref{sec:implications}, we discuss some implications and present some lines for future research.


\section{The problem of quantum correlations}
\label{sec:problem} 


\subsection{Compatibility, non-disturbance and ideal measurements}
\label{sus}


We consider the set of theories that assign probabilities to the outcomes of measurements.
We will denote by $P(x=a | \psi)$ the probability of obtaining outcome $a$ when measuring $x$ on state~$\psi$. By `state' we mean the object that encodes the expectations about the outcomes of future measurements. We do not assume any particular mathematical representation for the states, measurements, and outcomes.

\begin{definition}
	A measurement $z$ with outcomes $c \in C$ is a coarse-graining of a measurement $x$ with outcomes $a \in A$ if, for all $c \in C$, there is $A_c \subseteq A$ such that, for all states $\psi$, 
\begin{equation}
P(z=c | \psi) = \sum_{a \in A_c} P(x=a | \psi)
\end{equation}
and $A_c \cap A_{c'} = \emptyset$ if $c \neq c'$.
\end{definition}

\begin{definition}
Two measurements are compatible if they are coarse-grainings of the same measurement.
\end{definition}

\begin{definition}
Two sets of measurements, $X=\{x_i\}$, with respective outcomes $a_i \in A_i$, and $Y=\{y_j\}$, with respective outcomes $b_j \in B_j$, such that every pair $(x_i,y_j)$ are compatible, are mutually non-disturbing if, for all $x_i \in X$, $a_i \in A_i$, and $y_j,y_k \in Y$,
\begin{equation}
\label{no-disturbance1}
\sum_{b_j \in B_j} P(x_i=a_i,y_j=b_j|\psi)= \sum_{b_k \in B_k} P(x_i=a_i,y_k=b_k|\psi), 
\end{equation}
and, for all $y_i \in Y$, $b_i \in B_i$, and $x_j,x_k \in X$,
\begin{equation}
\label{no-disturbance2}
\sum_{a_j \in A_j} P(x_j=a_j,y_i=b_i|\psi)= \sum_{a_k \in A_k} P(x_k=a_k,y_i=b_i|\psi).
\end{equation} 
Therefore, the marginal probabilities $P(x_i=a_i|\psi)$ are independent of the choice of $y_j \in Y_j$ and the marginal probabilities $P(y_i=b_i|\psi)$ are independent of the choice of $x_j \in X_j$.
\end{definition}

\begin{definition}
A measurement is ideal (or sharp \cite{CY14}) if:
\begin{itemize}
	\item[(i)] it gives the same outcome when performed consecutive times on the same physical system, 
	\item[(ii)] it does not disturb compatible measurements and 
	\item[(iii)] all its coarse-grainings satisfy (i) and (ii).
\end{itemize}
\end{definition}


\subsection{Bell scenarios}


A Bell experiment \cite{Bell64,CHSH69} involves two or more spatially separated agents (typically referred to as parties). Each party performs, on a different subsystem of a composite system, one measurement freely chosen from a fixed set. The choice of measurement of each party is space-like separated from the measurement outcomes observed by the other parties. Therefore, assuming that faster-than-light communication is impossible, measurements performed in the same round of a Bell experiment by different parties are mutually non-disturbing.

A Bell scenario is characterized by the number of parties, the number of measurements each party can perform, the number of outcomes of each measurement, and the relations of compatibility between the measurements. For example, the Clauser-Horne-Shimony-Holt (CHSH) or $(2,2,2)$ Bell scenario \cite{Bell64,CHSH69} has two parties, each of them with two measurements, each of them with two outcomes, and every measurement of one party is compatible with every measurement of the other party.

The simplest quantum systems producing deviations from local realism in Bell scenarios are pairs of qubits. Quantum deviations from local realism require entangled states \cite{Gisin91} and incompatible local measurements. 


\subsection{Kochen-Specker contextuality scenarios}


KS contextuality scenarios (hereafter KS scenarios for brevity) extend Bell scenarios to situations in which compatible measurements are not necessarily spacelike separated as the sense of Bell experiments. A KS contextuality scenario is characterized by a set of ideal measurements, their outcomes, and their relations of compatibility.

The restriction to ideal measurements in KS scenarios (restriction that does not exist in Bell scenarios) makes that, in KS scenarios, compatible measurements become mutually non-disturbing (as occurs in Bell scenarios). Any Bell scenario with ideal measurements is a KS scenario, but Bell scenarios with non-ideal measurements are not KS scenarios.
 
The assumptions of repeatability of the outcomes [i.e., condition~(i) in the definition of ideal measurement] and mutual non-disturbance between measurements that are compatible [i.e., condition~(ii)] justify the assumption of outcome non-contextuality for ideal measurements made for hidden variable theories in Refs.~\cite{Specker60,Bell66,KS67,KCBS08,Cabello08,BBCP09,YO12,KBLGC12}. If conditions~(i) and (ii) do not hold, then the assumption of outcome non-contextuality is not satisfied by classical systems \cite{Spekkens14}.

Every KS set of quantum measurements (as defined in, e.g., Refs.~\cite{KS67,Peres91,CEG96,LBPC14}) defines a KS scenario. There are also KS scenarios that do not require entire KS sets, as quantum deviations from the predictions of KS non-contextual hidden variable theories occur in scenarios with fewer measurements. Among the latter, there are deviations that occur for particular states \cite{KCBS08} and deviations that occur for any state of a given dimension \cite{YO12,KBLGC12,BBC12}. 

Quantum deviations of KS non-contextual hidden variable theories require quantum systems of dimension three (qutrits) or larger \cite{Bell66,KS67}, and do not require entangled states. The simplest KS scenario in which qutrits produce KS contextuality is the Klyachko-Can-Binicio\u{g}lu-Shumovsky (KCBS) KS scenario \cite{KCBS08}, involving five measurements $x_i$, with $i=1,\ldots,5$ (with two possible outcomes), such that $x_i$ and $x_{i + 1}$ (with the sum taken modulo five) are compatible.


\subsection{Contexts and graphs of compatibility}


\begin{definition}
A context in a Bell or KS scenario $S$ is a subset of the measurements in~$S$ which only contains compatible (and mutually non-disturbing) measurements.
\end{definition}

The relations of compatibility between the measurements in $S$ can be pictured by a graph in which vertices represent measurements and edges relations of compatibility. A graph with this interpretation is called a graph of compatibility. For example, the graph of compatibility of the CHSH Bell scenario is a square and the graph of compatibility of the KCBS KS scenario is a pentagon. 

In a graph of compatibility, contexts are represented by cliques. A clique is a set of vertices every pair of which are adjacent.


\subsection{Mutually exclusive events and graphs of exclusivity}


The events of a Bell or KS scenario $S$ and their relations of mutual exclusivity are determined by the measurements, outcomes, and relations of compatibility between the measurements available in~$S$. 

\begin{definition}
Two events of $S$ are mutually exclusive when there is a measurement $M$, defined using the measurements in~$S$, such that each event corresponds to a different outcome of~$M$.
\end{definition}
For example, in the CHSH Bell scenario, the events are of the form $(x=a,y=b|\psi)$ with $x,y,a,b \in \{0,1\}$. Events $(x=a,y=b|\psi)$ and $(x'=a',y'=b'|\psi)$ are mutually exclusive if $x=x'$ and $a \neq a'$ or $y=y'$ and $b \neq b'$. For example, if $x=x'$, $a \neq a'$, and $y \neq y'$, then $M$ is the four-outcome measurement producing events $(x=a,y=0|\psi)$, $(x=a,y=1|\psi)$, $(x=a',y'=0|\psi)$, and $(x=a',y'=1|\psi)$.

The relations of mutual exclusivity between events can be pictured by a graph in which vertices represent events and edges relations of mutual exclusivity. A graph with this interpretation is called a graph of exclusivity \cite{CSW10,CSW14}. 

There are two types of graphs of exclusivity. On the one hand, there are graphs in which the vertices and edges encode (using colours) the measurements and outcomes that define the events and explain why some pairs of events are mutually exclusive. For example, the graph of exclusivity of this type for the 16~events of the CHSH Bell scenario is shown in Fig.~\ref{Fig1}.

On the other hand, there are graphs of exclusivity in which vertices and edges represent abstract events and relations of mutual exclusivity, respectively, without reference to any particular scenario. The vertices of these graphs have no labellings (except, possibly, a numeration). For example, the graph of exclusivity in Fig.~\ref{Fig2} is of this type. 


\begin{figure}[t]
\includegraphics[width=11.4cm]{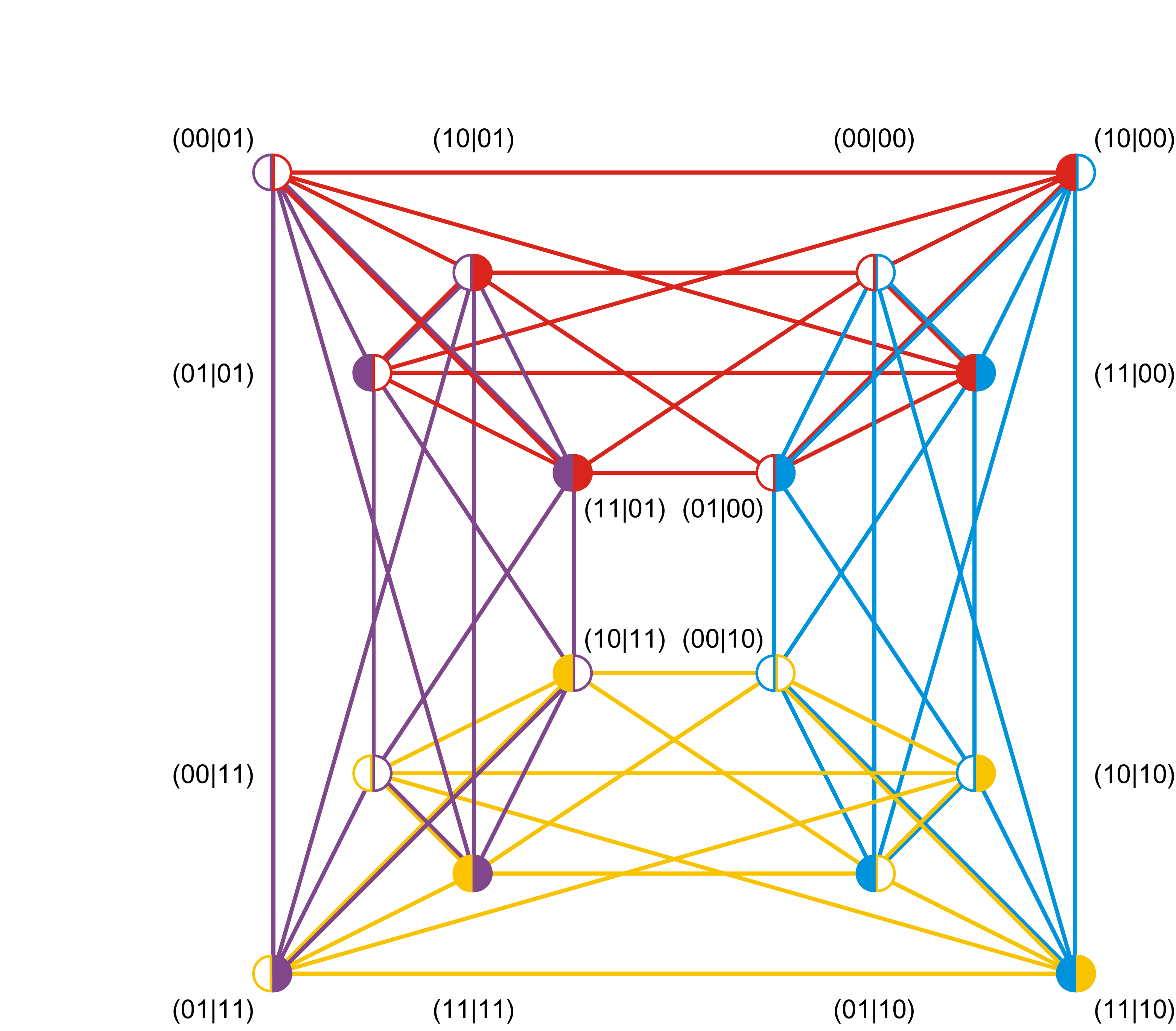}
\caption{Graph of exclusivity $G_{\rm CHSH}$ of the $16$ events of the CHSH Bell scenario. $(ab|xy)$ denotes the event $(x=a,y=b|\psi)$. Each colour corresponds to one of the measurements: red if $x$ is $0$, yellow if $x$ is $1$, cyan if $y$ is $0$ and purple if $y$ is $1$. Each event is characterized by the outcomes of two measurements. This is the reason why each vertex has two half circumferences. An empty (full) half circumference of a given colour denotes that the outcome of the measurement represented by that colour is $0$ (respectively, $1$).}
\label{Fig1}
\end{figure}


\subsection{Correlations}


What we informally refer to as `correlations' for a particular Bell or KS scenario~$S$ is a set $\mathbf{p}(S) \in \cal {P}(S)$ of probability distributions, one for each context. Following the terminology introduced in \cite{Tsirelson93} (and used in, e.g., \cite{BCPSW14}), we will refer to every $\mathbf{p}(S)$ as a `behaviour' for~$S$. In the literature $\mathbf{p}(S)$ is also called an `empirical model' \cite{AB11} or `probability model' \cite{AH12,AFLS15} for~$S$.

For a given Bell or KS scenario $S$, every initial state and set of measurements produce a behaviour. For example, for the CHSH Bell scenario $S_{\rm CHSH}$, if, for simplicity, we denote probabilities $P(x=a,y=b|\psi)$ as $P(ab|xy)$, a behaviour can be represented by the matrix
\begin{equation}
\label{beha}
\mathbf{p}(S_{\rm CHSH}) =
\begin{bmatrix}
P(00|00) & P(01|00) & P(10|00) & P(11|00) \\
P(00|01) & P(01|01) & P(10|01) & P(11|01) \\
P(00|10) & P(01|10) & P(10|10) & P(11|10) \\
P(00|11) & P(01|11) & P(10|11) & P(11|11)
\end{bmatrix}.
\end{equation}
Note that each row in the right-hand side of Eq.~(\ref{beha}) contains the probabilities of the events of one context. Alternatively, $\mathbf{p}(S_{\rm CHSH})$ can be seen as an assignment of probabilities to the (suitably ordered) vertices of the graph of exclusivity $G_{\rm CHSH}$ in Fig.~\ref{Fig1}. In this case, we will represent $\mathbf{p}(S_{\rm CHSH})$ by a vector $\mathbf{p}(G_{\rm CHSH}) \in [0,1]^{|V(G_{\rm CHSH})|}$, where $V(G_{\rm CHSH})$ is the set of vertices of $G_{\rm CHSH}$.


For a given Bell (KS scenario) $S$, the set of local realistic (KS non-contextual) behaviours is a closed convex polytope, i.e., a closed convex set whose boundaries are flat, called the local polytope (KS non-contextual polytope). The corresponding set in quantum theory is a convex sets that, in general, is not a polytope. The local polytope for the CHSH Bell scenario is identical to the non-contextual polytope for the KS scenario involving four two-outcome ideal measurements and having a square as graph of compatibility. Similarly, the local polytope for any Bell scenario is identical to the non-contextual polytope for the KS scenario that has the same number of measurements, outcomes, and graph of compatibility.


\subsection{Constraints for behaviours for Bell and Kochen-Specker scenarios}


For a fixed Bell or KS scenario~$S$, every behaviour $\mathbf{p}(S)$ must satisfy three constraints:
\begin{itemize}
\item[(A)] Normalization: For every context $\{x,\ldots,z\}$ (with respective outcomes $a \in A,\ldots,c \in C$) in $S$, 
\begin{equation}
\sum_{a \in A,\ldots,c \in C} P(x=a,\ldots,z=c|\psi)=1.
\end{equation}
Any subset of a context is also a context.
\item[(B)] Non-disturbance: Every pair $(X,Y)$ of mutually non-disturbing sets of measurements in~$S$ must satisfy Conditions~(\ref{no-disturbance1}) and (\ref{no-disturbance2}). 
\item[(C)] The probability of each event of~$S$ must only be a function of the state and measurement outcomes that define the event. For example, $P(x=a,y=b|\psi)$ must only be a function of $\psi$, $x=a$, and $y=b$. This constraints the behaviours since, for a fixed~$S$, any behaviour must correspond to a fixed $\psi$ and a fixed set $\{x=a,y=b\}$.
\end{itemize}


\subsection{Quantum behaviours for Bell and Kochen-Specker scenarios}
\label{Nai}


Any behaviour $\mathbf{p}(S)$ for a Bell or KS scenario~$S$ allowed by quantum theory satisfies the following conditions:
\begin{itemize}
\item[(I)] The initial state $\psi$ of the system can be associated with a vector with unit norm $| \psi \rangle$ in a Hilbert space ${\mathcal H}$. 
\item[(II)] The state after performing any set $\{x^{(i)}\}$ of compatible measurements in $S$ and obtaining, respectively, outcomes $\{a_i\}$ can be associated with a vector with unit norm 
\begin{equation}
\label{2.6}
|\psi' \rangle = N_i \prod_i E^{(i)}_{a_i} | \psi \rangle,
\end{equation}
where $N_i$ is a normalisation constant and $\{E^{(i)}_{a_i}\}$ are projection operators acting on~${\mathcal H}$. The projection operator $E^{(i)}_{a_j}$ corresponds to measurement $x^{(i)}$ with outcome~$a_j$. The projection operators corresponding to different outcomes of the same measurement satisfy
\begin{equation}
E^{(i)}_{a_j} E^{(i)}_{a_k} = \delta_{j,k} E^{(i)}_{a_k},
\end{equation} 
and 
\begin{equation}
\sum_k E^{(i)}_{a_k}=\id,
\end{equation}
where $\id$ is the identity operator.
If $x^{(i)}$ and $x^{(k)}$ are compatible measurements, then 
\begin{equation}
[E^{(i)}_{a_j},E^{(k)}_{a_m}]=0\; \forall j,m,
\end{equation}
where $[\ldots]$ denotes the commutator.
\item[(III)] The probability of obtaining $\{a_i\}$ when measuring $\{x^{(i)}\}$ on state $\psi$ satisfies $| \langle \psi' | \psi \rangle |^2$, where $|\psi' \rangle$ is given by Eq.~(\ref{2.6}).
\end{itemize}

Remarkably, the characterization of the quantum behaviours is similar for Bell scenarios (in which we do not assume that measurements are ideal) and for KS scenarios (in which all measurements are ideal by definition). This reflects the fact that any quantum behaviour for a Bell scenario can be attained with ideal measurements. This follows from Neumark's (also named Naimark's) dilation theorem \cite{Neumark40a,Neumark40b,Neumark43,Holevo80,Peres95} that states that every generalized measurement in quantum theory [represented by a positive-operator valued measure (POVM)] can be implemented as an ideal quantum measurement [represented by a projection-valued measure (PVM)] on a larger Hilbert space. In a Bell scenario, any local POVM $x$ admits a local dilation to a PVM that is common to every context in which $x$ appears. Due to this, the set of quantum behaviours for the Bell CHSH scenario is identical to set of quantum behaviours for the KS~scenario involving four two-outcome ideal measurements and having a square as graph of compatibility. And similarly for every Bell scenario.


\subsection{The problem of quantum correlations}


The problem we address here is identifying the physical reason or principle that explains why the behaviours that are realisable for Bell and KS scenarios are those that satisfy conditions~(I)--(III). The question we want to answer is where does the `irrational effectiveness' \cite{Wigner60} of the Hilbert space formalism supplemented with Born's rule for singling out the physically realisable behaviours come from, why the quantum formalism is empirically successful.

None of the proposed principles (non-signalling \cite{PR94}, non-triviality of communication complexity \cite{vanDam99}, information causality \cite{PPKSWZ09}, macroscopic locality \cite{NW09}, exclusivity \cite{Cabello13}, and local orthogonality \cite{FSABCLA13}) has succeeded in selecting the quantum behaviours even in the simplest Bell scenario. In fact, for every non-trivial Bell scenario, there are non-quantum behaviours which seem to satisfy all these principles \cite{NGHA15}.


\section{Result}
\label{sec:result}


\subsection{Assumptions}


We make the following assumptions:
\begin{assumption}
\label{Ass:BKS}
There is a non-empty set of behaviours for any KS scenario.
\end{assumption}

\begin{assumption}
\label{Ass:Independence}
There is a statistically independent joint realization of any two KS~experiments.
\end{assumption}

\begin{definition}
Two experiments ${\cal A}$ and ${\cal B}$ are statistically independent if
the occurrence of any of the events of ${\cal A}$ (${\cal B}$) does not affect the probability of occurrence of any of the events of ${\cal B}$ (respectively, ${\cal A}$).
\end{definition}

Consequently, if ${\cal A}$ and ${\cal B}$ are statistically independent experiments and matrix $\mathbf{p}(S_{\cal A})$ is the behaviour for ${\cal A}$ [e.g. $\mathbf{p}(S_{\cal A})$ could be the one in Eq.~(\ref{beha})] and $\mathbf{q}(S_{\cal B})$ is the behaviour for ${\cal B}$, then an observer can define an experiment $({\cal A},{\cal B})$ with a behaviour given by $\mathbf{p}(S_{\cal A}) \otimes \mathbf{q} (S_{\cal B})$, where $\otimes$ is the tensor product. 


\subsection{Result}


The aim of this paper is to prove the following 
\begin{theorem}
\label{Th:scenario}
The set of behaviours ${\cal P}(S)$ allowed by quantum theory for any Bell or KS scenario~$S$ is equal to the largest set allowed by assumptions~\ref{Ass:BKS} and~\ref{Ass:Independence}.
\end{theorem}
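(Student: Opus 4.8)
\noindent The plan is to phrase everything in the graph-theoretic language that the notation $\thb,\stab,\qstab,\abl$ is pointing at. A behaviour for a scenario $S$ is a nonnegative assignment $\mathbf{p}$ to the vertices of the exclusivity graph $G=G(S)$, normalized on contexts, and the quantum behaviours are exactly those lying in the Lov\'{a}sz theta body $\thb(G)$, with $\stab(G)\subseteq\thb(G)\subseteq\qstab(G)$ \cite{CSW10,CSW14}; Neumark's dilation theorem (Sec.~\ref{Nai}) is what lets Bell and KS scenarios be treated on a single footing. Two further facts make Assumptions~\ref{Ass:BKS} and~\ref{Ass:Independence} bite. First, the exclusivity principle --- pairwise exclusive events are outcomes of a single measurement, so their probabilities sum to at most one --- is not an extra postulate but a consequence of normalization, so every admissible behaviour already lies in $\qstab(G)$. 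Second, if two KS experiments with behaviours $\mathbf{p}$ on $G$ and $\mathbf{q}$ on $H$ are carried out in the statistically independent way of Assumption~\ref{Ass:Independence}, the joint experiment is itself a KS scenario, its behaviour is $\mathbf{p}\otimes\mathbf{q}$, and its exclusivity graph is the OR (disjunctive) product $G\ast H$, in which $(u,v)$ and $(u',v')$ are exclusive iff $u,u'$ are exclusive in $G$ or $v,v'$ are exclusive in $H$.

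First I would dispatch the easy inclusion, that every quantum behaviour is allowed. Quantum theory satisfies Assumption~\ref{Ass:BKS}, since every graph admits an orthonormal representation and hence every KS scenario has a quantum behaviour, and it satisfies Assumption~\ref{Ass:Independence}, since tensoring Hilbert spaces tensors the behaviours and $\thb(G)\otimes\thb(H)\subseteq\thb(G\ast H)$ --- tensor the orthonormal representations together with their handles. A logically consistent theory cannot forbid anything it itself realises, so no quantum behaviour is forbidden, and $\thb(G)$ lies in the largest allowed set for $G$.

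The substance is the converse: any admissible, allowed behaviour $\mathbf{p}$ on $G$ must lie in $\thb(G)$. The geometric key is that, identifying the vertex set of $\bar{G}$ with that of $G$, the diagonal $\{(v,v):v\in V(G)\}$ is a \emph{clique} of $G\ast\bar{G}$: any two distinct vertices are exclusive in exactly one of $G$ and $\bar{G}$, hence exclusive in the OR product. So I would take an arbitrary $\mathbf{q}\in\thb(\bar{G})$, observe that (after padding the scenario with extra outcomes if necessary) it arises as a quantum behaviour of a bona fide KS scenario built on $\bar{G}$ and is therefore not forbidden, hence realisable; realising it jointly with the hypothetically realisable $\mathbf{p}$ (Assumption~\ref{Ass:Independence}), the exclusivity principle on the diagonal clique of $G\ast\bar{G}$ forces $\sum_v p_v q_v\le1$. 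Hence $\langle\mathbf{p},\mathbf{q}\rangle\le1$ for every $\mathbf{q}\in\thb(\bar{G})$, i.e.\ $\mathbf{p}\in\abl\big(\thb(\bar{G})\big)$. By the Gr\"{o}tschel-Lov\'{a}sz-Schrijver antiblocking duality for convex corners, $\abl\big(\thb(\bar{G})\big)=\thb(G)$, so $\mathbf{p}\in\thb(G)$. With the easy inclusion, and after intersecting with the normalization constraints, this identifies the largest allowed set with the quantum set $\mathcal{P}(S)$; the detailed argument is in Ref.~\cite{Cabello18}.

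The step I expect to be the main obstacle is making this last argument airtight, i.e.\ showing that \emph{every} inequality $\langle\mathbf{p},\mathbf{q}\rangle\le1$ that cuts out $\thb(G)$ is genuinely forced --- equivalently, that each $\mathbf{q}\in\thb(\bar{G})$ can be promoted to a measurement inside a jointly realisable KS experiment. This is delicate because merely iterating the exclusivity principle over independent copies of a \emph{single} scenario is known not to shrink the allowed set down to $\thb(G)$; what rescues the argument is that Assumption~\ref{Ass:Independence} lets one import an independently realised behaviour living on the \emph{complementary} graph, and that complementary behaviours of exactly the required shape are guaranteed to exist and to be non-forbidden by Assumption~\ref{Ass:BKS} together with the totalitarian principle. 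Getting the bookkeeping of this complementary gadget right --- in particular the padding needed so that an arbitrary point of $\thb(\bar{G})$ really occurs as a behaviour of a legitimate KS scenario, and checking compatibility with constraints (A)--(C) --- is where the real care is required.
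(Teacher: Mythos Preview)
Your argument is correct and reaches the same conclusion, but by a genuinely different route from the paper's. The paper proceeds in two stages: first (Lemma~\ref{Lem:2}) it treats \emph{self-complementary} exclusivity graphs, using a squeeze argument in which the sets $\mathcal{E}^n(G)$ defined by the EP on $n$ independent copies descend to $\thb(G)$ from above while their antiblockers ascend from below as $n\to\infty$; then (Lemma~\ref{Lem:3}) it reduces an arbitrary $G$ to this case by embedding it into an explicitly constructed self-complementary graph $H(G)$ assembled from $G$, two copies of $\bar G$, a second copy of $G$, and three independent coins. You bypass both stages by going straight to the antiblocking identity $\abl\big(\thb(\bar G)\big)=\thb(G)$: import a quantum-realised $\mathbf{q}\in\thb(\bar G)$, run it jointly with the candidate $\mathbf{p}$, and read off $\langle\mathbf{p},\mathbf{q}\rangle\le1$ from the diagonal clique of $G\ast\bar G$. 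This is essentially the mechanism of Refs.~\cite{Yan13,ATC14}, which the paper cites but does not adopt as its main engine.

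What each approach buys: yours is shorter and conceptually cleaner---one clique inequality per supporting hyperplane of $\thb(G)$---but it leans on the input (from \cite{CSW14}) that quantum theory already fills all of $\thb(\bar G)$, so the easy inclusion is fed back into the hard direction. The paper's route is more self-contained in that the squeeze argument pins down $\thb(G)$ for self-complementary $G$ from Conditions~1 and~2 alone, without first invoking quantum realisability on the complement; the price is the $H(G)$ gadget and a less transparent limiting process. Your final paragraph correctly locates the residual work---the padding that promotes an arbitrary $\mathbf{q}\in\thb(\bar G)$ to a genuine KS behaviour, and the passage from $\thb(G_S)$ to $\mathcal{P}(S)$ via constraints~(A)--(C)---which the paper likewise flags and defers in detail to \cite{Cabello18}.
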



\subsection{The totalitarian principle}


The fact that the quantum set equals the largest set allowed by assumptions~\ref{Ass:BKS} and~\ref{Ass:Independence} implies that any behaviour allowed by these assumptions is compulsory in the sense that there exist a preparation and a set of measurements that produce it. This resembles Gell-Mann's totalitarian principle \cite{Gell-Mann56}: `anything not forbidden is compulsory'. 

The story of the totalitarian principle can be summarized as follows. In the context of the interactions between baryons, antibaryons and mesons, Gell-Mann made the observation that any process which is not forbidden by a conservation law is not only allowed but {\em must} be included in the sum over all paths which contribute to the outcome of the interaction. Gell-Mann name it the `principle of compulsory strong interactions' \cite{Gell-Mann56} and commented that `is related to the state of affairs that is said to prevail in a perfect totalitarian state. Anything that is not compulsory is forbidden' \cite{Gell-Mann56}. After that, the principle started to be called `Gell-Mann's totalitarian principle' \cite{BS69,Zweig10} and reformulated as `anything not forbidden is compulsory'. 
However, Trigg \cite{Trigg70} and Weinberg \cite{Weinberg05} have pointed out that the author who deserves the credit for the totalitarian principle is T.~H.~White because, as Trigg remarks, `[i]n {\em The Sword in the Stone}, part~1 of {\em The Once and Future King}, Wart, the character who will later be King Arthur, is being educated by Merlin by being transformed into various animals. One of his experiences is as an ant, and he finds that the ant hill is run on the totalitarian principle' \cite{Trigg70}. Specifically, in Chapter~XIII of {\em The Sword in the Stone} \cite{White38}, 
`EVERYTHING NOT FORBIDDEN IS COMPULSORY' is the slogan carved over the entrance to each tunnel in the ant fortress.

In our version of the totalitarian principle, `anything' refers to `any behaviour for a Bell or KS scenario' and `forbidden' means `forbidden by assumptions~\ref{Ass:BKS} and~\ref{Ass:Independence}'.


\section{Proof}
\label{sec:proof}


A behaviour for scenario $S$ satisfies the exclusivity principle (EP) \cite{CSW10,Cabello13,Yan13,CSW14,ATC14,CY14,Cabello15,Henson15,AFLS15} if the sum of the probabilities of the events of any set in which every two events are mutually exclusive in~$S$ is bounded by~$1$.

The proof of Theorem~\ref{Th:scenario} begins with the following observations.

\begin{lemma}
	\label{Lem:1}
	The behaviours for any KS scenario must satisfy the EP. 
\end{lemma}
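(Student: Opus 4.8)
The plan is to show that the exclusivity principle follows from the two assumptions, using the statistical-independence assumption as the crucial ingredient that converts a purely local ``sum $\le 1$'' constraint into a global one. First I would recall that within a \emph{single} context the normalisation constraint (A) already forces the probabilities of pairwise-exclusive events to sum to exactly $1$, since such events are different outcomes of one measurement $M$ built from the scenario's measurements; so the only non-trivial case is a set of events that are pairwise exclusive but do \emph{not} all lie in a common context — an ``anticlique spread across several contexts.'' The goal is to bound $\sum_i P(e_i)$ by $1$ for any such set $\{e_1,\dots,e_n\}$ in the exclusivity graph of the KS scenario $S$.

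The key step is the standard ``orthogonality/exclusivity amplification'' trick (as in the exclusivity-principle literature, e.g. \cite{Cabello13,Yan13}): suppose for contradiction that some KS scenario $S$ admits a behaviour with $\sum_{i=1}^n P(e_i) = 1 + \epsilon$ for some $\epsilon > 0$ on a set of pairwise-exclusive events. By Assumption~\ref{Ass:Independence} one may take $N$ statistically independent copies $S^{(1)},\dots,S^{(N)}$ of $S$, whose joint behaviour is the tensor product of the individual behaviours; and Assumption~\ref{Ass:BKS} guarantees this composite is itself (a behaviour for) a legitimate KS scenario, so it too must obey normalisation. In the $N$-fold composite one then builds a new event set by taking, for each choice of one event from each copy, the conjunction event (which is an outcome of the product measurement $M^{(1)}\otimes\cdots\otimes M^{(N)}$): two such product events are mutually exclusive as soon as they differ in \emph{at least one} coordinate on a pair of originally-exclusive events. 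The combinatorial heart is to extract from the $n^N$ product events a large \emph{clique in the complement is small} — equivalently a large independent set in the product exclusivity graph that can be completed to a single context — with total probability $(1+\epsilon)^N$ up to the fraction of the product events one can pack into one measurement; since $(1+\epsilon)^N \to \infty$ while normalisation in any single context caps the sum at $1$, one derives a contradiction for $N$ large enough. Hence no such violating behaviour exists and the EP holds.

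I expect the main obstacle to be precisely this combinatorial amplification bookkeeping: one must show that a suitable fraction — bounded below independently of $N$, or decaying subexponentially in $N$ — of the $n^N$ product events can be realised as outcomes of a \emph{single} measurement (i.e. form a clique in the exclusivity graph of the composite scenario, which here means a context), so that their probabilities genuinely sum to $(1+\epsilon)^N$ times a factor that still blows up. This is where the precise definition of ``mutually exclusive events'' (Definition via a common measurement $M$) and the closure of KS scenarios under statistically independent products (Assumption~\ref{Ass:BKS} applied to the composite) must be invoked carefully; one also has to check that the product of ideal measurements on independent systems is again an ideal measurement, so that the composite is a bona fide KS scenario and constraint (A) applies to it. Once that packing lemma is in place, the contradiction with normalisation is immediate and the lemma follows.
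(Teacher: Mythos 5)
Your proposal takes a genuinely different route from the paper, and it has a gap at exactly the point you flag as the ``combinatorial heart'': the packing step is not difficult bookkeeping, it is false without an extra ingredient. Pairwise exclusive events that do not all lie in one context cannot, in general, be realised (even a constant fraction of them) as distinct outcomes of a \emph{single} measurement of the composite scenario; the statement that their probabilities are capped by $1$ is precisely the EP, not a consequence of normalization per context. If your argument worked using only constraint~(A), non-disturbance and statistical independence, it would establish the EP for \emph{all} non-signalling behaviours, which is false: the Popescu--Rohrlich behaviour $\mathbf{p}_{\rm PR}(S_{\rm CHSH})$ of Eq.~(\ref{PR}) (and Specker's triangle behaviour) satisfies normalization and non-disturbance in every context of every $N$-fold statistically independent composite, yet violates the EP on two copies. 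In that composite, the product events whose probabilities sum to $(1+\epsilon)^{N}$ are pairwise exclusive but do not fit inside any single context, so no normalization constraint is violated; what is violated is the EP for the composite, which is what you set out to prove. So the contradiction never materialises from your assumptions. (A minor point: within a single context, pairwise exclusive events sum to \emph{at most} $1$, not exactly $1$, since they need not be exhaustive.)

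The paper's proof of lemma~\ref{Lem:1} uses none of this machinery and, in particular, does not invoke assumption~\ref{Ass:Independence} at all. It rests on the defining feature of KS scenarios, namely that all measurements are ideal: for ideal measurements (repeatable, non-disturbing of compatible measurements, with all coarse-grainings inheriting these properties) one can show \cite{CY14,Cabello18} that any set of pairwise exclusive events can be jointly distinguished by a single measurement, e.g.\ implemented sequentially, so that their probabilities sum to at most $1$. Ideality is the ingredient your proposal mentions only in passing (when checking that products of ideal measurements are ideal) but never actually uses, and it is the ingredient that does all the work; once you have it, no amplification over copies is needed for the single-copy EP. The independent-copies composition you describe belongs to the later steps of the paper (lemmas~\ref{Lem:2} and~\ref{Lem:3}), where, with the single-copy EP already established, the EP applied to products tightens the allowed set down to the theta body; it cannot substitute for the single-copy EP itself.
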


\begin{proof}
In KS scenarios all measurements are ideal. It can be proven \cite{CY14,Cabello18} that the~EP holds for ideal measurements. Therefore, every behaviour for any KS scenario must satisfy the~EP.
\end{proof}

\begin{lemma}
	\label{Th:3}
	The behaviours for any bipartite Bell scenario must satisfy the EP.
\end{lemma}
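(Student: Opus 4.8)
The plan is to reduce the bipartite Bell case to the KS case already handled in Lemma~\ref{Lem:1}. The key structural fact to exploit is the one recalled in Sec.~\ref{Nai}: by Neumark's dilation theorem, every local POVM in a bipartite Bell scenario admits a local dilation to a PVM that can be taken common to every context in which that measurement appears. Concretely, I would argue that any behaviour $\mathbf{p}(S)$ for a bipartite Bell scenario $S$ can be realised (as far as its event probabilities are concerned) by a behaviour of the KS scenario $S'$ that has the same number of parties, measurements per party, outcomes per measurement, and the same graph of compatibility, but where all measurements are now declared ideal (sharp PVMs on a possibly enlarged Hilbert space). Since compatible ideal measurements are mutually non-disturbing and consecutive-repeatable, $S'$ is a bona fide KS scenario, and Lemma~\ref{Lem:1} says its behaviours satisfy the EP. Because the dilation preserves all the joint outcome probabilities of the events of $S$, the EP inequalities pulled back to $S$ hold for $\mathbf{p}(S)$.

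The steps, in order, would be: (1) fix a bipartite Bell scenario $S$ and an arbitrary behaviour $\mathbf{p}(S)$; take an arbitrary set $\{e_1,\ldots,e_n\}$ of pairwise mutually exclusive events of $S$, and recall (from the definition of mutual exclusivity together with the construction illustrated after the definition) that each such pair is witnessed by a single measurement built from the measurements of $S$ having the two events as distinct outcomes. (2) Show that the exclusivity relations among the events of $S$ are identical to the exclusivity relations among the corresponding events of the associated KS scenario $S'$: this is because exclusivity is fixed by measurements, outcomes, and the compatibility graph alone, all of which are shared by $S$ and $S'$. (3) Invoke Neumark dilation to produce, from the state and POVMs realising $\mathbf{p}(S)$, a state and PVMs on a larger Hilbert space realising the same event probabilities; check that the resulting model is a legitimate behaviour for the KS scenario $S'$, i.e. that it satisfies normalisation and non-disturbance within each context (automatic for PVMs) and that the same projector is used for a given local outcome across all contexts (this is exactly what the ``common local dilation'' part of the theorem guarantees). (4) Apply Lemma~\ref{Lem:1} to conclude $\sum_{i} P(e_i'|\psi')\le 1$ in $S'$, and transport this back to $\sum_i P(e_i|\psi)\le 1$ in $S$.

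The main obstacle I anticipate is step~(3): making sure the Neumark dilation can be performed simultaneously and consistently for all measurements of a given party, so that the dilated projectors genuinely define a single KS scenario rather than a context-dependent patchwork. The subtlety is that in a Bell scenario a fixed local POVM appears in several contexts (paired with different measurements of the other party), and one must dilate it to one PVM that works for all of those contexts at once; the paper asserts this is possible in the bipartite case, and the bipartiteness is what makes it work (each party's measurements can be dilated independently, and compatibility across parties is just tensor-factor commutativity, which dilation respects). A secondary point to be careful about is that mutual exclusivity in a Bell scenario may be witnessed by a measurement that combines one party's measurement-plus-outcome with another party's whole measurement (as in the four-outcome $M$ in the CHSH example); one must confirm that such composite witnessing measurements also exist, with the right PVM structure, after dilation — but since tensor products of PVMs are PVMs, this goes through.
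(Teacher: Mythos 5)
Your reduction has a genuine gap: it only covers \emph{quantum} behaviours, whereas Lemma~\ref{Th:3} must hold for \emph{every} behaviour of a bipartite Bell scenario. In a Bell scenario the measurements are not assumed ideal, and a behaviour is constrained only by normalization and non-disturbance (non-signalling); nothing entitles you to ``the state and POVMs realising $\mathbf{p}(S)$'' in your step~(3). For a generic non-signalling behaviour (e.g.\ a Popescu--Rohrlich box, or the composite of two independent PR boxes, to which the paper explicitly applies this lemma) there is no quantum realisation at all, so Neumark dilation cannot even be invoked, and the passage to an associated KS scenario with ideal measurements collapses. Worse, if the lemma were restricted to behaviours that already admit a quantum description, the argument would be circular in the context of Theorem~\ref{Th:scenario}, whose whole point is to \emph{derive} the quantum set: the Neumark remark in Sec.~\ref{Nai} is a statement about the quantum sets for Bell versus KS scenarios, not a licence to idealise arbitrary behaviours.

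The paper's proof is of a different, purely structural kind: by definition of a Bell scenario, any behaviour satisfies normalization and non-signalling, and it is a known fact (Refs.~\cite{CSW10,Cabello18}) that for \emph{bipartite} Bell scenarios the set of behaviours satisfying the single-copy EP coincides with the set satisfying normalization plus non-signalling; hence every behaviour, quantum or not, obeys the EP. If you want a self-contained argument along the paper's lines, the work to be done is combinatorial, not dilation-theoretic: show that for any set of pairwise exclusive events in a two-party scenario, the exclusivity structure allows the sum of their probabilities to be bounded by $1$ using only the normalization of contexts and the equality of marginals across contexts. Your dilation idea cannot substitute for that step.
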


\begin{proof}
	By definition of Bell scenario, behaviours must satisfy normalization and non-signalling (i.e., non-disturbance). It can be proven \cite{CSW10,Cabello18} that, for bipartite Bell scenarios, the set of behaviours satisfying the EP (applied to a single copy) is equal to the set of behaviours that satisfy normalization and the non-signalling principle.
\end{proof}

Assumption~\ref{Ass:Independence} assures that there is a statistically independent joint realization of any two KS experiments ${\cal A}$~and~${\cal B}$ (including Bell experiments). This allows us to define experiments of the type $({\cal A},{\cal B})$ described before.

If ${\cal A}$ and ${\cal B}$ are KS experiments, then $({\cal A},{\cal B})$ can be seen as a single KS experiment. Therefore, lemma~\ref{Lem:1} assures that the behaviours for $({\cal A},{\cal B})$ must satisfy the~EP. 

If ${\cal A}$ and ${\cal B}$ are bipartite Bell experiments, then $({\cal A},{\cal B})$ can be seen as a bipartite Bell experiment. Therefore, lemma~\ref{Th:3} assures that the behaviours for $({\cal A},{\cal B})$ must satisfy the~EP.

If ${\cal A}$ is a KS experiment and ${\cal B}$ a bipartite Bell experiment, then $({\cal A},{\cal B})$ can be seen as a bipartite Bell experiment. Therefore, lemma~\ref{Th:3} assures that the behaviours for $({\cal A},{\cal B})$ must satisfy the~EP.

Similar arguments apply to $n$~statistically independent experiments. In particular, the same behaviour $\mathbf{p}(S_A)$ for a Bell or KS~experiment can be composed with itself as many times as we want. If $\mathbf{p}(S_A)$ occurs in $n$~statistically independent experiments ${\cal A}_i$, then, by lemmas~\ref{Lem:1} and~\ref{Th:3}, the corresponding behaviour for the corresponding experiment $({\cal A}_1, \ldots, {\cal A}_n)$, that is, $\mathbf{p}(S_{(A_1,\ldots,A_n)})=\mathbf{p}(S_{\cal A})^{\otimes n}$, where $\mathbf{p}(S_{\cal A})^{\otimes n}$ denotes the tensor product of $n$ copies of $\mathbf{p}(S_{\cal A})$, must satisfy the EP for any $n$.

Different sets of events may share the same graph of exclusivity~$G$. This leads to the following definition:

\begin{definition}
The set ${\cal P}(G)$ of assignments of probabilities to the vertices of graph~$G$ is the set of vectors $\mathbf{p}(G) \in [0,1]^{|V(G)|}$ such that the components of $\mathbf{p}(G)$ are the probabilities of $|V(G)|$ events with graph of exclusivity~$G$ in a behaviour for {\em some} Bell or KS scenario.
\end{definition} 
That is, ${\cal P}(G)$ contains the vectors of probabilities with $|V(G)|$ components corresponding to events that have~$G$ as graph of exclusivity produced in {\em all} Bell or KS scenarios.

\begin{lemma}
	\label{Lem:2}
	For any self-complementary graph of exclusivity $G$, the theta body of $G$, $\thb(G)$, is the largest set of assignments of probabilities ${\cal P}(G)$ such that every $\mathbf{p}(G) \in {\cal P}(G)$ satisfies the EP applied to any number of independent copies of $\mathbf{p}(G)$ and such that $\mathbf{p}(G) \otimes \mathbf{q}(G)$ satisfies the EP for every $\mathbf{p}(G), \mathbf{q}(G) \in {\cal P}(G)$.
\end{lemma}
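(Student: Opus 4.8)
The plan is to identify the three constraints that cut out $\thb(G)$ among all probability assignments, and show they coincide with the conditions in the statement. Recall that the theta body of a graph $G$ can be characterised (Grötschel--Lovász--Schrijver) as the set of $\mathbf{p} \in [0,1]^{|V(G)|}$ that satisfy the clique constraints for the \emph{fractional} clique cover, equivalently as $\{\mathbf{p} : \sum_{v} p_v \, \mathbb{1}[v \in \text{independent set}]$ is bounded by an orthonormal representation constraint$\}$; most usefully for us, $\mathbf{p}(G) \in \thb(G)$ iff the Lovász-type semidefinite constraint holds, and this is equivalent to: for every induced subgraph the sum of probabilities over any clique is at most $1$ (the EP), \emph{together with} the multiplicativity of $\vartheta$, i.e. $\vartheta(G \boxtimes H) = \vartheta(G)\vartheta(H)$, which is exactly what controls independent copies. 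So the first step is to set up this dictionary precisely: the EP applied to a single copy of $\mathbf{p}(G)$ says $\sum_{v \in Q} p_v \le 1$ for every clique $Q$ of $G$; the EP applied to $\mathbf{p}(G) \otimes \mathbf{q}(G)$ says the analogous bound for cliques of the strong product $G \boxtimes G$ with the product assignment; and the EP applied to $n$ independent copies says the bound for cliques of $G^{\boxtimes n}$.

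Next I would establish the two inclusions. For $\thb(G) \subseteq {\cal P}(G)$-restricted-to-these-conditions: this is the easy direction. Every $\mathbf{p}(G) \in \thb(G)$ satisfies the fractional clique constraints, hence the EP; and since $\thb(G \boxtimes H) \supseteq \thb(G) \boxtimes \thb(H)$ (the theta body is super-multiplicative under strong product — indeed the Lovász number is multiplicative), the tensor product of two theta-body points again satisfies all clique inequalities of the product graph, i.e. the EP for the composed assignment, and likewise for $n$ copies. The genuinely substantive direction is the reverse: a probability assignment obeying the EP on all independent copies (and on $\mathbf{p} \otimes \mathbf{q}$) must lie in $\thb(G)$. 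Here is where self-complementarity of $G$ enters decisively. The key fact is that for a vertex-transitive or, more to the point, self-complementary graph, the fractional packing and covering numbers are tied together — $\alpha^*(G)\,\alpha^*(\bar G)$ relates to $|V(G)|$ — and more importantly, the \emph{Shannon capacity} characterisation: if $\mathbf{p}$ violates a theta-body facet, one amplifies the violation by taking many independent copies, and because $G \cong \bar G$ the exclusivity graph of the $n$-fold composition contains large cliques that certify a violation of the EP in the limit. Concretely, one shows: if $\mathbf{p}(G) \notin \thb(G)$, then there is an orthonormal representation witness, and by self-complementarity one can turn the handle on $G^{\boxtimes n}$ so that the sum of the product probabilities over a suitable clique exceeds $1$ for $n$ large. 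This is the argument of Ref.~\cite{Cabello18}, resting on the Lovász bound $\vartheta(G) = \Theta(G)$-type estimates together with the self-complementary symmetry.

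The main obstacle, and the step I expect to require the most care, is precisely this amplification: showing that a \emph{single} violation $\sum_{v \in Q} p_v > 1 + \epsilon$ of some valid theta-body inequality (not just a clique inequality — a genuine facet coming from an orthonormal representation / Lovász-$\vartheta$ constraint) can be \emph{detected} by the EP applied to finitely many independent copies. The subtlety is that the EP on a single copy is strictly weaker than membership in $\thb(G)$ (it gives only the clique-constrained polytope $\qstab(G)$), so one really needs the multi-copy / product structure, and the bridge from "fails the SDP constraint" to "fails a clique inequality on $G^{\boxtimes n}$" is exactly where self-complementarity is not merely convenient but essential: it is what makes the orthonormal-representation bound on $G$ transportable to a combinatorial clique bound on powers of $G$. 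Once that is in hand, combined with Lemmas~\ref{Lem:1} and~\ref{Th:3} (which guarantee every $\mathbf{p}(G) \in {\cal P}(G)$ does satisfy the EP on all independent copies), the characterisation of $\thb(G)$ as the maximal such set follows.
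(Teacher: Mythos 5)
Your dictionary for composition is wrong, and the error is fatal rather than cosmetic. Two events from statistically independent experiments are mutually exclusive as soon as they are exclusive in \emph{at least one} factor, so the graph of exclusivity of a composed experiment is the OR (disjunctive, co-normal) product $G \ast G$, as in the paper, not the strong product you use throughout. Every clique of the strong product is contained in a Cartesian product of cliques of the factors, so for a product assignment $\mathbf{p}\otimes\mathbf{q}$ the clique inequalities of the strong product reduce to $\bigl(\sum_{i\in Q}p_i\bigr)\bigl(\sum_{j\in Q'}q_j\bigr)\le 1$, which already follows from the single-copy EP; with your product, the EP on independent copies adds nothing and no amplification can ever occur. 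The whole mechanism lives in cliques of $G\ast G$ that are \emph{not} products of cliques, e.g. the diagonal-type clique $\{(i,\sigma(i))\}_{i\in V(G)}$ obtained from an isomorphism $\sigma:G\to\overline{G}$, which exists precisely because $G$ is self-complementary; this is what excludes Specker's triangle, Wright's pentagon and the PR box. Relatedly, your opening ``characterisation'' of $\thb(G)$ (clique constraints plus multiplicativity of $\vartheta$) is not a characterisation: $\thb(G)$ is in general strictly smaller than $\qstab(G)$ and is defined by orthonormal-representation (semidefinite) constraints; also the easy inclusion should be argued by tensoring the vector representations, which places $\mathbf{p}\otimes\mathbf{q}$ in $\thb(G\ast G)\subseteq\qstab(G\ast G)$.

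The harder direction is also argued along a different, and unsupported, route. You propose that any $\mathbf{p}\notin\thb(G)$ is detected by the EP applied to finitely many independent copies of $\mathbf{p}$ \emph{itself}, with self-complementarity invoked only as an unspecified ``turning the handle.'' That is not the paper's argument, and you give no proof of it; self-composition of a single behaviour yields only quadratic constraints of the form $\sum_i p_i p_{\sigma(i)}\le 1$ and is not known to pin down $\thb(G)$. The paper instead uses both hypotheses of the lemma at the level of the \emph{set} ${\cal P}(G)$: Condition 1 (self-copies) gives ${\cal P}(G)\subseteq{\cal E}^n(G)$, where ${\cal E}^n(G)$ is the set of assignments whose $n$-fold tensor power lies in $\qstab(G^{\ast n})$; Condition 2 (pairs $\mathbf{p}\otimes\mathbf{q}$, via the diagonal clique above) gives ${\cal P}(G)\subseteq\abl[{\cal P}(G)]$; and self-complementarity enters through antiblocker duality, $\abl[\thb(G)]=\thb(\overline{G})=\thb(G)$, and the sandwich $\abl[{\cal E}^{n-1}(G)]\subseteq\abl[{\cal E}^{n}(G)]\subseteq{\cal E}^{n}(G)$, with ${\cal E}^n(G)$ and $\abl[{\cal E}^n(G)]$ converging to $\thb(G)$ from above and below. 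Maximality is then a statement about sets satisfying both containments, with $\thb(G)$ itself satisfying them; the pairwise condition, which your sketch never actually uses, is exactly what the known amplification results (composing $\mathbf{p}$ with a different, optimal behaviour for $\overline{G}$) require. As written, both inclusions in your proposal need to be redone with the OR product and with the antiblocker step supplied.
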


Given a graph $G$, $\overline{G}$ denotes the complement of $G$, i.e. the graph with the same vertices as $G$ and such that two distinct vertices of $\overline{G}$ are adjacent if and only if they are not adjacent in $G$. A graph $G$ is self-complementary if $G$ and $\overline{G}$ are isomorphic. The theta body of $G$, denoted $\thb(G)$, is a well-studied convex set in graph theory (e.g. \cite{GLS88,Knuth94}). It was introduced in \cite{GLS86} and, among the many ways to express it, the following one, using Dirac's notation, is particularly useful for our purposes:
\begin{equation}
\begin{split}
\thb(G) = & \{ \mathbf{p}(G) \in [0,1]^{|V(G)|} : p_i = |\langle x^{(i)}_{a_i} \psi | \psi \rangle|^2,\;\\
& |\langle \psi| \psi \rangle| = 1,\; |\langle x^{(i)}_{a_i}\psi | x^{(i)}_{a_i}\psi \rangle| = 1,\;\\
&\langle x^{(i)}_{a_i}\psi | x^{(j)}_{a_j}\psi \rangle = 0,\;\forall (i,j) \in E(G) \},
\end{split}
\end{equation}
where $E(G)$ is the set of edges of $G$.

\begin{proof}(of Lemma~\ref{Lem:2})
There are two necessary conditions that any candidate for ${\cal P}(G)$ satisfying assumptions~\ref{Ass:BKS} and~\ref{Ass:Independence} must satisfy:

Condition 1: Any $\mathbf{p}(G) \in {\cal P}(G)$ must satisfy the EP applied to any number $n$ of independent copies of $\mathbf{p}(G)$. Therefore, for any $n$, 
\begin{equation}
\label{Cond1}
{\cal P}(G) \subseteq {\cal E}^n(G),
\end{equation}
where
\begin{equation}
{\cal E}^n(G) = \{\mathbf{p}(G) \in [0,1]^{|V(G)|} : \mathbf{p}(G)^{\,\otimes n} \in \qstab(G^{\ast n})\}, 
\end{equation}
where $\mathbf{p}(G)^{\,\otimes n}=\mathbf{p}(G) \otimes \cdots \otimes \mathbf{p}(G)$ ($n$ times), $G^{\ast n}$ is the OR product of $n$ copies of $G$, and 
\begin{equation}
\qstab(G) = \{ \mathbf{p}(G) \in [0,1]^{|V(G)|} : \sum_{i\in c} p_i \le 1\,\;\forall c \in C(G) \},
\end{equation}
where $C(G)$ is the set of cliques of $G$.
The OR product (also called disjunctive or co-normal product) of $G$ and $G'$, denoted $G \ast G'$, is the graph with $V(G \ast G') = V(G) \times V(G')$ and $((i, i'), (j, j')) \in E(G \ast G')$ if and only if $(i,j) \in E(G)$ or $(i',j') \in E(G')$.
$\qstab(G)$ is a famous convex set in graph theory called the clique-constrained stable set polytope \cite{GLS88} (or fractional stable set polytope \cite{GLS88} or fractional vertex packing polytope \cite{GLS86}) of $G$. 

Condition~2: Any $\mathbf{p}(G), \mathbf{q}(G) \in {\cal P}(G)$ must satisfy the EP applied to one copy of $\mathbf{p}(G)$ and one independent copy of $\mathbf{q}(G)$. 
This implies that
\begin{equation}
\label{Cond2}
{\cal P}(G) \subseteq \abl[{\cal P}(G)],
\end{equation} 
where $\abl[{\cal P}(G)]$ is the antiblocker of ${\cal P}(G)$, defined as 
\begin{equation}
\abl[{\cal P}(G)]= \{\mathbf{q}(G) \geq \mathbf{0} : \mathbf{p}(G) \cdot \mathbf{q}(G) \leq 1\,\forall \mathbf{p}(G) \in {\cal P}(G) \},
\end{equation}
where $\cdot$ is the dot product \cite{GLS88,Knuth94,GLS86}.

If $G$ is self-complementary, then, for any $n$,
\begin{equation}
\label{Cond22}
\abl[{\cal E}^{n-1}(G)] \subseteq \abl[{\cal E}^{n}(G)] \subseteq {\cal E}^{n}(G).
\end{equation} 
Therefore, (\ref{Cond1}) implies that
\begin{equation}
\label{Cond11}
{\cal P}(G) \subseteq \lim_{n \to \infty } {\cal E}^n(G).
\end{equation}
and (\ref{Cond2}) implies that
\begin{equation}
\label{Cond3}
{\cal P}(G) \subseteq \lim_{n \to \infty } \abl[{\cal E}^n(G)].
\end{equation}
If $G$ is self-complementary, as $n$~tends to infinity, ${\cal E}^{n}(G)$ and $\abl[{\cal E}^{n}(G)]$ tend to $\thb(G)$ from above and below, respectively, and $\thb(G)$ is the largest set satisfying (\ref{Cond11}) and (\ref{Cond3}) [the other sets satisfying these conditions are subsets of $\thb(G)$]. Therefore, if $G$ is self-complementary, the largest ${\cal P}(G)$ is $\thb(G)$. In this case, ${\cal P}(G) = \abl[{\cal P}(G)]$ \cite{GLS86,GLS88,Knuth94}.
\end{proof}

Lemma~\ref{Lem:2} states that for any self-complementary graph of exclusivity $G$, the largest set of assignments of probabilities satisfying Conditions~1 and~2 is the one that contains all those assignments of probabilities that satisfy that:
\begin{itemize}
\item[(I')] The state of the system can be associated with a vector with unit norm $| \psi \rangle$ in a vector space ${\mathcal V}$ with an inner product.
\item[(II')] The state after performing, on state $\psi$, a measurement $x^{(i)}$ and obtaining outcome $a_i$ can be associated with a vector with unit norm $ | x^{(i)}_{a_i}\psi \rangle$ in ${\mathcal V}$. Post-measurement states corresponding to mutually exclusive events are associated to mutually orthogonal vectors.
\item[(III')] The probability of the event $(x^{(i)}_{a_i} | \psi)$ can be obtained as $|\langle x^{(i)}_{a_i} \psi | \psi \rangle|^2$.
\end{itemize}

Interestingly, in quantum theory, for any $G$, ${\cal P}(G)=\thb(G)$ \cite{CSW14}. Therefore, at least for self-complementary graphs of exclusivity, assumptions~\ref{Ass:BKS} and~\ref{Ass:Independence} select the quantum set.

\begin{lemma}
\label{Lem:3}
For any graph of exclusivity $G$, $\thb(G)$ is the largest set ${\cal P}(G)$ of assignments of probabilities such that every $\mathbf{p}(G) \in {\cal P}(G)$ satisfies the EP applied to any number of independent copies of $\mathbf{p}(G)$ and such that $\mathbf{p}(G) \otimes \mathbf{q}(G)$ satisfies the EP for every $\mathbf{p}(G), \mathbf{p}(G) \in {\cal P}(G)$.
\end{lemma}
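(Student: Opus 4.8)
The plan is to reduce the case of a general graph of exclusivity $G$ to the self-complementary case already settled in Lemma~\ref{Lem:2}. The key device is the standard graph-theoretic trick of embedding an arbitrary graph into a self-complementary one in a way that is ``transparent'' to both the theta body and the physical construction of behaviours. First I would recall the construction: given $G$ on $m$ vertices, one can build a self-complementary graph $\widehat{G}$ (for instance on $4m$ or $5m$ vertices, using the classical Sachs/Ringel constructions that glue four copies of $G$ and $\overline{G}$ together along a suitable tournament-like pattern of joins) such that $G$ appears as an induced subgraph of $\widehat{G}$, and moreover the vertex set of $\widehat{G}$ splits into blocks, one of which is exactly $V(G)$. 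The two facts I need about this embedding are: (a) $\mathrm{TH}(\widehat{G})$ restricted (projected) to the coordinates of the distinguished copy of $G$ equals $\mathrm{TH}(G)$ — this is the well-known fact that $\mathrm{TH}$ behaves well under taking induced subgraphs, $\mathrm{TH}(G)$ being the projection of $\mathrm{TH}(H)$ onto $V(G)$ whenever $G$ is induced in $H$; and (b) any orthonormal-representation-type assignment realizing a point of $\mathrm{TH}(G)$ extends to one realizing a point of $\mathrm{TH}(\widehat{G})$ whose restriction to $V(G)$ is the original point, simply by assigning to the extra vertices vectors that respect the additional exclusivity edges (one can always do this in a large enough inner-product space, padding with fresh orthogonal directions).

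Next I would argue the two inclusions. For $\mathrm{TH}(G) \subseteq {\cal P}(G)$ (achievability): by Interestingly-in-quantum-theory remark, ${\cal P}(G) = \mathrm{TH}(G)$ in quantum theory for every $G$ \cite{CSW14}, so every point of $\mathrm{TH}(G)$ is realized by some quantum behaviour for some Bell or KS scenario, and such behaviours automatically satisfy the EP on any number of independent copies (Lemmas~\ref{Lem:1} and~\ref{Th:3} together with Assumption~\ref{Ass:Independence}, exactly as in the paragraphs preceding the definition of ${\cal P}(G)$) and the cross-EP condition for $\mathbf{p}(G) \otimes \mathbf{q}(G)$. Hence $\mathrm{TH}(G)$ is contained in the candidate set and is achievable. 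For the reverse inclusion — that no assignment outside $\mathrm{TH}(G)$ can satisfy Conditions~1 and~2 — suppose $\mathbf{p}(G) \in {\cal P}(G)$ satisfies the two EP conditions. I would lift $\mathbf{p}(G)$ to an assignment $\widehat{\mathbf{p}}$ on $V(\widehat{G})$ by filling the extra blocks with assignments that themselves live in $\mathrm{TH}$ of the corresponding induced subgraphs (which we already know are realizable), chosen small enough that all the additional clique constraints of $\widehat{G}$ are comfortably satisfied; because $\widehat{G}$ is self-complementary, Lemma~\ref{Lem:2} then forces $\widehat{\mathbf{p}} \in \mathrm{TH}(\widehat{G})$, and projecting back onto $V(G)$ gives $\mathbf{p}(G) \in \mathrm{TH}(G)$ by fact~(a).

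The main obstacle I anticipate is making the lift in the last step genuinely valid: I need the EP conditions that $\mathbf{p}(G)$ satisfies (together with whatever I can independently guarantee about the padding assignments) to translate into the EP conditions on $\widehat{\mathbf{p}}$ for $\widehat{G}$, i.e. I must check that every clique of $\widehat{G}$ — and of $\widehat{G}^{\ast n}$, and of the OR-product with another copy — either lies inside one of the blocks (where I control things by construction) or spans several blocks in a way whose probability sum I can bound using normalization within blocks plus the smallness of the padding. Concretely, the delicate point is that a clique of $\widehat{G}$ crossing the $V(G)$-block and an auxiliary block picks up some of the $p_i$ and some padding probabilities, so I must budget the padding mass so that these mixed sums never exceed $1$ regardless of which EP-saturating clique of $G$ is involved; since cliques of $G$ already sum to at most $1$ under Condition~1, a uniform $\varepsilon$-scaling of the padding with $\varepsilon$ small suffices, and the same scaling survives tensoring because clique sums multiply under the OR product. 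Once that bookkeeping is in place the argument closes, and I would remark that $\mathrm{TH}(G)$ is then also equal to its own antiblocker-type closure and coincides with the quantum set ${\cal P}(G)$, completing the reduction of Theorem~\ref{Th:scenario} to the graph-theoretic identity.
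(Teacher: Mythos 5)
Your overall strategy -- embed $G$ as a block of the self-complementary join of $G,\overline{G},\overline{G},G$ and invoke Lemma~\ref{Lem:2} -- is exactly the paper's (your $\widehat{G}$ is the paper's $H(G)$ of Fig.~\ref{Fig2}), but the crucial lifting step has a genuine gap. To apply Lemma~\ref{Lem:2} to the lifted point $\widehat{\mathbf{p}}$ you must show that $\widehat{\mathbf{p}}^{\,\otimes n}$ satisfies the EP for \emph{every} $n$, i.e.\ that every clique of $\widehat{G}^{\ast n}$ -- including cliques that mix the $V(G)$ block with padding blocks and mix different copies -- has weight at most $1$. Your justification rests on the claim that ``clique sums multiply under the OR product'', and that is false: take $G=C_5$ with uniform weights $1/2$ (Wright's assignment, discussed in the paper); every clique of $C_5$ has weight at most $1$, yet $C_5^{\ast 2}$ contains a clique on five vertices (its complement is the strong product of two pentagons, whose independence number is $5$), of weight $5/4$. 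This non-multiplicativity is precisely why the EP applied to many copies is a nontrivial constraint, so it cannot be assumed in the very step that is supposed to exploit it. Nor does the ``small $\varepsilon$'' budgeting rescue the argument as stated: $\varepsilon$ is fixed once and for all while $n$ is arbitrary, cliques of $\widehat{G}^{\ast n}$ meeting the padding can contain a number of vertices that grows with $n$ each carrying only a single factor of $\varepsilon$, and already the cliques internal to the padding powers force the padding assignment itself to lie in $\thb(\overline{G})$, with the mixed cliques still unaccounted for. A separate, smaller omission: you never explain why events in different blocks are mutually exclusive at all, which is what the paper's three coins are for.

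Two ways to close the gap. The paper's route avoids the clique bookkeeping entirely: $H(G)$ is realized as the event graph of a single composite experiment (four mutually independent experiments plus three two-outcome coins that supply the inter-block exclusivities), so any allowed $\mathbf{p}(G)$ lifts to an allowed assignment for $H(G)$, and the EP for that lift and for any number of further independent copies follows from Lemma~\ref{Lem:1} (or Lemma~\ref{Th:3}) together with Assumption~\ref{Ass:Independence}, because the composite is again a single KS (or bipartite Bell) experiment; Lemma~\ref{Lem:2} then applies to ${\cal P}[H(G)]$ and one traces out. Alternatively, your purely combinatorial lift can be repaired by taking the padding identically zero: with $\widehat{\mathbf{p}}=(\mathbf{p}(G),\mathbf{0},\mathbf{0},\mathbf{0})$ every vertex of $\widehat{G}^{\ast n}$ with any coordinate outside the $V(G)$ block has weight $0$, so each clique's weight reduces to that of a clique of $G^{\ast n}$ under $\mathbf{p}(G)^{\otimes n}$, Conditions~1 and~2 for $\widehat{\mathbf{p}}$ follow verbatim from those for $\mathbf{p}(G)$, and then $\widehat{\mathbf{p}}\in\thb(\widehat{G})$ by Lemma~\ref{Lem:2}, whose restriction to the induced copy of $G$ (only the easy direction of your fact~(a) is needed) gives $\mathbf{p}(G)\in\thb(G)$. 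Your achievability direction, $\thb(G)\subseteq{\cal P}(G)$ via quantum realizability, is fine and matches the paper's remark following Lemma~\ref{Lem:2}.
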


\begin{proof}
For any $G$, there is an operation that maps $G$ into a a self-complementary graph $H(G)$ such that the largest ${\cal P}[H(G)]$ allowed by assumptions~\ref{Ass:BKS} and \ref{Ass:Independence} determines the largest ${\cal P}(G)$ allowed by these assumptions. This map can be visualized as follows.
Consider an experiment~${\cal E}$ producing $n$ events $\{e_k\}_{k=1}^n$ whose graph of exclusivity is $G$. Then, consider three additional mutually independent experiments: experiment~${\cal X}$, producing events $\{x_k\}_{k=1}^n$ whose graph of exclusivity is $\overline{G}$, experiment~${\cal Y}$, producing events $\{y_k\}_{k=1}^n$ whose graph of exclusivity is $\overline{G}$, and experiment~${\cal Z}$ producing events $\{z_k\}_{k=1}^n$ whose graph of exclusivity is~$G$. 
Now suppose an observer witnessing ${\cal E}$, ${\cal X}$, ${\cal Y}$ and ${\cal Z}$. Suppose that this observer has 
three independent coins ${\cal A}$, ${\cal B}$ and ${\cal C}$, each of them producing two mutually exclusive events: ${\cal A}$ producing events $a_0$ or $a_1$, ${\cal B}$ producing $b_0$ or $b_1$, and ${\cal C}$ producing $c_0$ or $c_1$. Suppose that, using the four independent experiments and the three coins, the observer defines the following $4 n$ events: $\{(a_0,e_k),(a_1,b_0,x_k),(b_1,c_0,y_k),(c_1,z_k)\}_{k=1}^n$, where, e.g. $(a_0,e_1)$ is the event in which coin ${\cal A}$ gives $a_0$ and experiment ${\cal E}$ gives $e_1$. $H(G)$ is the graph of exclusivity of these $4 n$ events. Fig.~\ref{Fig2} shows $H(G)$ when $G$ is an heptagon.


\begin{figure}[t]
\includegraphics[width=13.6cm]{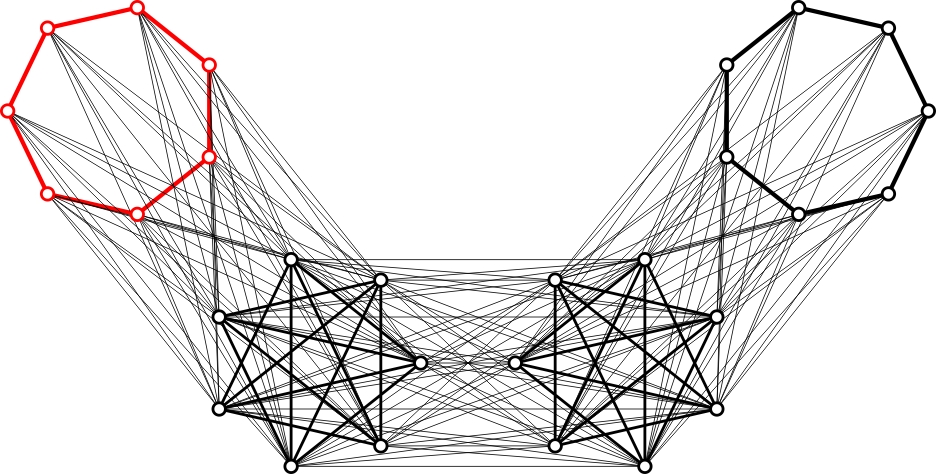}
\caption{Graph $H(G)$ used in the proof of lemma~\ref{Lem:3}, when $G=C_7$ (the heptagon in red). All the vertices of the heptagon are connected to all the vertices of its complement, which are also connected to all the vertices of a second copy of the complement, which are connected to a second heptagon. The resulting graph $H(C_7)$ is self-complementary. In addition, the largest set ${\cal P}[H(C_7)]$ allowed by assumptions~\ref{Ass:BKS} and \ref{Ass:Independence} determines the largest set ${\cal P}(C_7)$ allowed by these assumptions.}
\label{Fig2}
\end{figure}


There are two properties of $H(G)$ that are important for us. The first one is that any assignment of probabilities $\mathbf{h}[H(G)] \in {\cal P}[H(G)]$ can be implemented by suitably choosing assignments $\mathbf{p}(G) \in {\cal P}(G)$, $\mathbf{x}(\overline{G}) \in {\cal P}(\overline{G})$, $\mathbf{y}(\overline{G}) \in {\cal P}(\overline{G})$ and $\mathbf{z}(G) \in {\cal P}(G)$ for, respectively, $\{e_k\}_{k=1}^n$, $\{x_k\}_{k=1}^n$, $\{y_k\}_{k=1}^n$ and $\{z_k\}_{k=1}^n$, and assignments of probabilities $\mathbf{a}(K_2) \in {\cal P}(K_2)$, $\mathbf{b}(K_2)\in {\cal P}(K_2)$ and $\mathbf{c}(K_2)\in {\cal P}(K_2)$ for, respectively, $\{a_0,a_1\}$, $\{b_0,b_1\}$ and $\{c_0,c_1\}$ ($K_2$ is the complete graph on two vertices; the graph of exclusivity of the events of tossing a coin) \cite{Cabello18}. Therefore, the largest ${\cal P}(G)$ allowed by assumptions~\ref{Ass:BKS} and \ref{Ass:Independence} can always be obtained from the largest ${\cal P}[H(G)]$ allowed by these assumptions by suitably tracing out its elements. 

The second property is that $H(G)$ is self-complementary. Therefore, by lemma~\ref{Lem:2}, the largest ${\cal P}[H(G)]$ allowed by assumptions~\ref{Ass:BKS} and \ref{Ass:Independence} is $\thb[H(G)]$. 
Using these two properties, we conclude that the largest ${\cal P}(G)$ allowed by assumptions~\ref{Ass:BKS} and \ref{Ass:Independence} is $\thb(G)$. 
\end{proof}


A consequence of lemma~\ref{Lem:3} is that the set of behaviours for any given Bell or KS scenario $S$ is a subset of $\thb(G_S)$, where $G_S$ is the graph of exclusivity of all possible events of~$S$. However, $\thb(G_S)$ may contain behaviours that are forbidden in $S$. To remove them, we need to take into account constraints~(A), (B) and~(C) for~$S$ (see Sec.~\ref{sec:problem}). 

The way constraints~(A) and~(B) for $S$ remove elements of $\thb(G_S)$ is easy to understand. The way constraint~(C) for $S$ excludes behaviours of $\thb(G_S)$ is more subtle. To illustrate it, we will focus on the case that $S$ is the CHSH Bell scenario. It is possible to produce any assignment of probabilities in $\thb(G_S)$ if we have a suitable vector with unit norm $|\psi\rangle$ in a vector space~${\cal V}$ [by Condition~(I')] and $16$~suitable vectors of unit norm (or rank-one projectors) and their orthogonal complements in~${\cal V}$ (since measurements have only two outcomes) [by Condition~(II')], so behaviours satisfy Condition~(III'). However, in the CHSH Bell scenario we have only four two-outcome measurements. Therefore, we have only four pairs projector-orthogonal complement, each of them represented in Fig.~\ref{Fig1} by one edge with a full half circumference in one extreme and an empty half circumference in the other extreme, all of the same colour. According to constraint~(C), these four pieces must be combined exactly as shown in Fig.~\ref{Fig1} to produce the events and explain their relations of mutual exclusivity.

It is not difficult to see that, for any given $S$, constraints~(A), (B) and~(C) applied after conditions~(I'), (II') and~(III') are equivalent to conditions~(I), (II) and~(III) (see Sec.~\ref{Nai}). See Ref.~\cite{Cabello18} for details. This finishes the proof of Theorem~\ref{Th:scenario}.


\section{Conclusion}
\label{sec:conclusion}


We started by assuming that there is a non-empty set of behaviours for any KS scenario and a statistically joint realisation of any two KS experiments. Then we showed that, for any specific Bell or KS scenario, the largest set of behaviours allowed by these assumptions is the set allowed by quantum theory. 

This suggests a simple principle for quantum correlations: every behaviour that is not forbidden by these assumptions is feasible. That is, there is a preparation and a set of measurements that produces it. This result fits with Wheeler's thesis that the universe lacks of laws for the outcomes of some experiments and with Born's intuition that quantum theory is a consequence of the nonexistence of `conditions for a causal evolution' \cite{Born26}. One can still assume that some laws and conditions for a causal evolution exist. However, then our result shows that they have no effect on the correlations allowed for Bell and KS scenarios.


\section{Implications}
\label{sec:implications}


In this final part, we explore some of the implications of Theorem~\ref{Th:scenario} (and of the lemmas used for its proof) and address some frequently asked questions.


\subsection{Specker's triangle parable and Wright's pentagon}


In Ref.~\cite{Specker60}, within the framework of a parable involving an over-protective Assyrian prophet, Specker introduced the following behaviour for a scenario in which there are three measurements $\{1,2,3\}$ whose graph of compatibility is a triangle (denoted as $C_3$) and whose possible outcomes are $0$ and $1$. If we denote as $P(a_i a_j|x_i x_j)$ the probability of obtaining outcomes $a_i$ and $a_j$ when measuring $x_i$ and $x_j$, respectively, a behaviour in this scenario can be represented by a matrix
\begin{equation}
\label{behaS}
\mathbf{p}(S_{C_3}) =
\begin{bmatrix}
P(00|12) & P(01|12) & P(10|12) & P(11|12) \\
P(00|23) & P(01|23) & P(10|23) & P(11|23) \\
P(00|31) & P(01|31) & P(10|31) & P(11|31)
\end{bmatrix}.
\end{equation}
Specifically, Specker's behaviour is 
\begin{equation}
\label{behaS2}
\mathbf{p}_{\rm S}(S_{C_3}) =
\begin{bmatrix}
0 & \frac{1}{2} & \frac{1}{2} & 0 \\[5pt]
0 & \frac{1}{2} & \frac{1}{2} & 0 \\[5pt]
0 & \frac{1}{2} & \frac{1}{2} & 0
\end{bmatrix}.
\end{equation}
It is easy to see that $\mathbf{p}_{\rm S}(S_{C_3})$ violates the EP when we consider two independent copies of $\mathbf{p}_{\rm S}(S_{C_3})$. Therefore, under assumptions~\ref{Ass:BKS} and~\ref{Ass:Independence}, the proof of lemma~\ref{Lem:1} shows that $\mathbf{p}_{\rm S}(S_{C_3})$ cannot be produced with ideal measurements. However, $\mathbf{p}_{\rm S}(S_{C_3})$ can be trivially produced if we remove the assumption that the measurements are ideal \cite{LSW11,LSW17}.

The same argument and conclusion apply to a behaviour proposed by Wright in Ref.~\cite{Wright78} for a scenario in which there are five measurements whose graph of compatibility is a pentagon (denoted as $C_5$). Using a similar notation to the one adopted in the previous example, a behaviour for this scenario can be represented by a matrix
\begin{equation}
\label{behaW}
\mathbf{p}(S_{C_5}) =
\begin{bmatrix}
P(00|12) & P(01|12) & P(10|12) & P(11|12) \\
P(00|23) & P(01|23) & P(10|23) & P(11|23) \\
P(00|34) & P(01|34) & P(10|34) & P(11|34) \\
P(00|45) & P(01|45) & P(10|45) & P(11|45) \\
P(00|51) & P(01|51) & P(10|51) & P(11|51) \\
\end{bmatrix}.
\end{equation}
Specifically, Wright's behaviour is
\begin{equation}
\label{behaS3}
\mathbf{p}_{\rm W}(S_{C_5}) =
\begin{bmatrix}
0 & \frac{1}{2} & \frac{1}{2} & 0 \\[5pt]
0 & \frac{1}{2} & \frac{1}{2} & 0 \\[5pt]
0 & \frac{1}{2} & \frac{1}{2} & 0 \\[5pt]
0 & \frac{1}{2} & \frac{1}{2} & 0 \\[5pt]
0 & \frac{1}{2} & \frac{1}{2} & 0
\end{bmatrix}.
\end{equation}


\subsection{Popescu-Rohrlich boxes}


In Ref.~\cite{PR94}, Popescu and Rohrlich (and previously other authors \cite{Rastall85,KC85}) proposed a non-quantum behaviour for the CHSH Bell scenario that maximizes the violation of the CHSH Bell inequality without violating the condition of non-signalling. This behaviour, known as a pair of Popescu-Rohrlich boxes, can be expressed, using the convention introduced in Eq.~(\ref{beha}), as follows:
\begin{equation}
\label{PR}
\mathbf{p}_{\rm PR}(S_{\rm CHSH}) =
\begin{bmatrix}
\frac{1}{2} & 0 & 0 & \frac{1}{2} \\[5pt]
\frac{1}{2} & 0 & 0 & \frac{1}{2} \\[5pt]
\frac{1}{2} & 0 & 0 & \frac{1}{2} \\[5pt]
0 & \frac{1}{2} & \frac{1}{2} & 0
\end{bmatrix}.
\end{equation}
$\mathbf{p}_{\rm PR}(S_{\rm CHSH})$ violates the EP when we consider two independent copies of $\mathbf{p}_{\rm PR}$ \cite{Cabello13,FSABCLA13}. Therefore, under assumptions~\ref{Ass:BKS} and~\ref{Ass:Independence}, lemma~\ref{Lem:1} shows that a pair of statistically independent Popescu-Rohrlich boxes cannot be constructed with ideal measurements.

Moreover, a pair of statistically independent Popescu-Rohrlich boxes can be seen as a single bipartite Bell experiment. Therefore, according to lemma~\ref{Th:3}, the behaviour of this bipartite Bell experiment must satisfy the~EP. But it does not. 


\subsection{Almost quantum correlations}


In Ref.~\cite{NGHA15}, Navascu\'es {\em et al.} presented, for every Bell scenario, a set of behaviours, called almost quantum behaviours, that satisfy all principles proposed to that date but contains non-quantum behaviours. For example, the following is a non-quantum almost quantum behaviour for the CHSH Bell scenario, using the convention introduced in Eq.~(\ref{beha}):
\begin{equation}
\label{AQ}
\mathbf{p}_{\rm AQ}(S_{\rm CHSH})=
\begin{bmatrix}
\frac{2993}{5500} & \frac{8}{1375} & \frac{137}{500} & \frac{22}{125} \\[5pt]
\frac{107}{700} & \frac{139}{350} & \frac{139}{350} & \frac{37}{700} \\[5pt]
\frac{7}{11}+\frac{\sqrt{2}}{9} & \frac{2}{11}-\frac{\sqrt{2}}{9} & \frac{2}{11}-\frac{\sqrt{2}}{9} & \frac{\sqrt{2}}{9} \\[5pt]
\frac{2993}{5500} & \frac{137}{500} & \frac{8}{1375} & \frac{22}{125}
\end{bmatrix}.
\end{equation}

For a given Bell scenario $S$, the set of almost quantum correlations is a subset of $\thb(G_S)$ that satisfies the normalization and non-disturbance constraints for $S$ [constraints~(A) and (B) in Sec.~\ref{sec:problem}] \cite{NGHA15}. However, our result implies that every non-quantum behaviour in the set of almost quantum correlations for $S$ fails to satisfy constraint~(C) for $S$. For the case of $\mathbf{p}_{\rm AQ}(S_{\rm CHSH})$ in (\ref{AQ}), this failure can be checked with a computer: $\mathbf{p}_{\rm AQ}(S_{\rm CHSH})$ cannot be attained if we demand that $P(ab|xy)=|\langle \psi' | \psi \rangle|^2$, with $|\psi' \rangle = N_{ab|xy} E^{(x)}_a E^{(y)}_b |\psi \rangle$, where $N_{ab|xy}$ is a normalisation constant, and $E^{(x)}_a$ and $E^{(y)}_b$ are projection operators corresponding to, respectively, measurement $x$ with outcome $a$ and measurement $y$ with outcome $b$, and such that, for all $x$, 
$E^{(x)}_{a} E^{(x)}_{a'} = \delta_{a,a'} E^{(x)}_{a}$ and 
$\sum_{a \in A} E^{(x)}_a=\id$, for all $y$, $E^{(y)}_{b} E^{(y)}_{b'} = \delta_{b,b'} E^{(y)}_{y}$ and $\sum_{b \in B} E^{(y)}_b=\id$, and, for all $x,y$, $[E^{(x)}_a, E^{(y)}_b]=0$.


\subsection{Real versus complex quantum theory}


Theorem~\ref{Th:scenario} explains the origin of some of the most mysterious rules of quantum theory. However, it does not provide a full reconstruction of quantum theory. A natural question then is what else is needed to recover the whole formalism. 

A first specific target follows from the observation that quantum correlations for Bell and KS scenarios admit representations both in real and complex vector spaces \cite{Stueckelberg59,Stueckelberg60,MMG09,ABW13}, while the full quantum theory is formulated in {\em complex} vector spaces. Why is so?

In previous operational reconstructions of quantum theory, the complex-vector-space representation is enforced by imposing the axiom of `local tomography' (the state of a composite system is determined by the joint probabilities it assigns to measurement outcomes on the component subsystems) \cite{Hardy01,MM11,CDP11,BMU14}. Instead of this axiom, a purely physical reason enforcing the complex-vector-space representation could be the requirement that the laws of nature stay the same for all observers that are moving with respect to one another within an inertial frame. This suggests that the theory must be Lorentz-invariant at an ontological level. But this requires the theory to be free of instantaneous influences and holistic spaces inaccessible to the experimenters. In contrast to that, real-vector-space representations of quantum correlations seem to require \cite{ABW13} a holistic inaccessible ontological space. This observation suggests a way to attack the problem of why the complex-vector-space representation that is worth further examination.


\subsection{Quantum logic and the exclusivity principle}


As pointed out by Henson in Ref.~\cite{Henson15}, the EP is related to the notion of orthocoherence that appeared in quantum logic. `[A]n orthoalgebra is orthocoherent if and only if finite pairwise summable subsets are jointly summable' \cite{Wilce17}. The origin of orthocoherence can be traced back to Mackey's axiom~V in Ref.~\cite{Mackey63}. Interestingly, in the literature of quantum logic, since the end of the 1970's, orthocoherence is presented as `suspicious (\ldots) as a fundamental principle' \cite{Wilce00}. Three related reasons are offered for that:
\begin{itemize}
	\item[(a)] `There has never been given any entirely compelling reason for regarding orthocoherence as an essential feature of all reasonable physical models' \cite{Wilce17}.
	\item[(b)] `[I]t is quite easy to manufacture simple and plausible toy examples that are not orthocoherent' \cite{Wilce00} (see, e.g. \cite{Specker60,Wright78}).
	\item[(c)] `[Orthocoherence] is not stable under formation of the tensor product' \cite{Wilce00} (see \cite{RF79,FR81}).
\end{itemize}

In contrast, in this paper, we have seen that
\begin{itemize}
	\item[(a')] A compelling reason why the EP is an essential feature of any reasonable physical theory is that the EP holds for events produced by ideal measurements. Ideal measurements must exist in any theory that contains classical physics as a particular case. There are also other reasons. For example, in Ref.~\cite{BMU14}, it is shown that two postulates are sufficient to guarantee the~EP. The first postulate says that every state can be represented as a probabilistic mixture of perfectly distinguishable pure states. The second postulate is that every set of perfectly distinguishable pure states of a given dimension can be reversibly transformed to any other such set of the same dimension.
	In Ref.~\cite{Henson15}, it is shown that irreducible third-order interference (a generalization of the idea that no probabilistic interference remains unaccounted for once we have taken into account interference between pairs of slits in an $n$-slit experiment) also implies the~EP. Finally, in Ref.~\cite{CCKM19}, it is shown that every Bayesian framework must include a set of ideal experiments that must satisfy the~EP. 
	\item[(b')] As we have seen before, the toy examples in Refs.~\cite{Specker60,Wright78} (i.e. Specker's triangle and Wright's pentagon) are impossible or trivial, depending on whether or not we make the assumption that measurements are ideal. 
	\item[(c')] Observation (c) is not an obstacle but an opportunity. The fact that a behaviour $\mathbf{p}(S)$ fails to satisfy the EP applied to $n$~copies indicates that $\mathbf{p}(S)$ is forbidden under assumptions~\ref{Ass:BKS} and \ref{Ass:Independence}.
\end{itemize}


\subsection{Bohmian mechanics, many-worlds, and QBism}


It is interesting to compare the explanation of the origin of the Born rule suggested in this paper with other explanations given by some interpretations of quantum theory.

In the Bohmian interpretation \cite{Bohm52a,Bohm52b,Bohm53}, measurement outcomes are determined by a field (called the quantum potential) that permeates the universe. Each particle is at a certain position, and this position is governed by the quantum potential. However, the quantum potential is hidden to the experimenters, so experimenters are restricted to calculating the probability density to observe that the particle is at some position. The reason why outcomes happen with relative frequencies given by the Born rule is the initial state of the quantum potential and the initial positions of all particles \cite{Bohm53}.

In the Everettian interpretation of quantum theory \cite{Everett57}, a measurement is a unitary transformation of a universal wave function that gives rise to a multiverse. Everettians and proponents of the decoherent (or consistent) histories interpretation of quantum theory \cite{GH90}
have made several attempts to justify the Born rule from simpler assumptions (e.g. \cite{Finkelstein65,Hartle68,FGG89,Deutsch99,Wallace03,Wallace07,Zurek03,Zurek05,Vaidman11}). However, each of these attempts in turn has attracted critical attention (e.g. \cite{Squires90,CS96,BCFFS00,CS05,SF05}).

The problem is that any derivation of the Born rule that {\em assumes} that measurements can be represented by self-adjoint operators in a vector space ${\mathcal V}$, outcomes correspond to their eigenvalues, and mutually exclusive results correspond to orthonormal basis is almost circular. Because then, for every ${\mathcal V}$ of dimension $d$ greater than two, Gleason's theorem \cite{Gleason57} shows that the only possible states are vectors in ${\mathcal V}$ and the only possible choice to assign a probability $P(v_i)$ to every vector $v_i$ in ${\mathcal V}$ for any given vector $v$ in ${\mathcal V}$ such that, for every orthonormal basis $\{v_i\}_{i=1}^d$ of ${\mathcal V}$, the sum of the probabilities satisfies $\sum_{i=1}^d P(v_i)=1$ and $P(v_i)$ only depends on $v_i$ (and not on the orthonormal basis of ${\mathcal V}$ considered) is $P(v_i) = | \langle v_i | v \rangle |^2$. 

QBism \cite{FS13} is a different type of interpretation. It does not assume any particular picture of the universe. For QBism, quantum theory is a {\em personal} tool for each agent. According to QBism, the Born rule is not a law of nature in the usual sense, but `an empirical {\em addition} to the laws of Bayesian probability' \cite{FS13} that a wise agent should follow in addition to the Bayesian coherence conditions. However, QBism does not clarify where does this empirical addition come from. In this respect, Theorem~\ref{Th:scenario} could be an important contribution to QBism as it shows that the set of quantum behaviours does not require any empirical addition to the laws of Bayesian probability but follows from two natural assumptions that fit perfectly within a Bayesian framework.


\subsection{Emergence of classicality}


Existing explanations of the emergence of classicality from quantum theory such as decoherence \cite{Zurek91} and the restriction to coarse-grained measurements \cite{KB07} assume the formalism of quantum theory. An interesting problem is understanding the emergence of classicality within the framework proposed in this paper. An interesting observation in this sense is that, for any self-complementary graph of exclusivity $G$, the~EP applied to a single copy selects a set, ${\cal E}^{1}(G)$ (see Sec.~\ref{sec:proof}) whose elements do not violate the~EP with other elements if and only if these other elements belong to $\abl[{\cal E}^{1}(G)]$, which is exactly the set of classical assignments of probability for~$G$ \cite{CSW14}. This points out that the set of classical assignments of probability for $G$ is particularly robust under the EP and suggests that the problem of the emergence of classicality may be benefit from the approach to quantum theory proposed in this paper.\vskip6pt

\enlargethispage{20pt}


\competing{The author declares that he has no competing interests.}


\funding{This work was supported by the Foundational Questions Institute (FQXi) Large Grant `The Observer Observed: A Bayesian Route to the Reconstruction of Quantum Theory' (FQXi-RFP-1608). The author is also supported by the MINECO-MCINN project `Quantum Tools for Information, Computation and Research' (FIS2017-89609-P) with FEDER funds and the Knut and Alice Wallenberg Foundation project `Photonic Quantum Information'. This study has been partially financed by the Conserjer\'{\i}a de Conocimiento, Investigaci\'on y Universidad, Junta de Andaluc\'{\i}a and European Regional Development Fund (ERDF), Ref.~SOMM17/6105/UGR.}


\ack{The author thanks B.~Amaral, M.~Ara\'ujo, C.~Budroni, E.~Cavalcanti, G.~Chiribella, P.~Grangier, M.~J.~W.~Hall, G.~Jaegger, M.~Kleinmann, S.~L\'opez-Rosa, A.~J.~L\'opez-Tarrida, M.~P.~M\"uller, A.~Ribeiro~de~Carvalho, M.~Terra~Cunha, H.~M.~Wiseman, Z.-P.~Xu and W.~H.~Zurek for discussions and comments.}




\begin{thebibliography}{9}
	

\bibitem{Wheeler88}
Wheeler JA. 1988
In 
\href{https://doi.org/10.1007/978-94-009-3061-2_7}{{\em Probability in the sciences} (ed. E Agazzi), p.~103. Dordrecht, Holland: Springer.}

\bibitem{Wheeler83a}
Wheeler JA. 1983
`On recognizing ``law without law,''' Oersted medal response at the joint APS-AAPT Meeting, New York, 25 January 1983.
\href{https://doi.org/10.1119/1.13224}{{\em Am. J. Phys.} \textbf{51}, 398--404.}

\bibitem{MTZ09}
Misner CW, Thorne KS, Zurek WH. 2009
John Wheeler, relativity, and quantum information.
\href{https://doi.org/10.1063/1.3120895}{{\em Phys. Today} \textbf{62}, 40--46.}


\bibitem{Gell-Mann56}
Gell-Mann M. 1956
The interpretation of the new particles as displaced charge multiplets.
\href{https://doi.org/10.1007/BF02748000}{{\em Il Nuovo Cimento} \textbf{4}, 848--866.}


\bibitem{Bell64}
Bell JS. 1964
On the Einstein Podolsky Rosen paradox.
\href{https://doi.org/10.1103/PhysicsPhysiqueFizika.1.195}{{\em Physics} \textbf{1}, 195--200.}

\bibitem{CHSH69}
Clauser JF, Horne MA, Shimony A, Holt RA. 1969
Proposed experiment to test local hidden-variable theories.
\href{https://doi.org/10.1103/PhysRevLett.23.880}{{\em Phys. Rev. Lett.} \textbf{23}, 880--884.}

\bibitem{Specker60}
Specker EP. 1960
Die Logik nicht gleichzeitig entscheidbarer Aussagen.
\href{https://doi.org/10.1111/j.1746-8361.1960.tb00422.x}{{\em Dialectica} \textbf{14}, 239--246.} 

\bibitem{Bell66}
Bell JS. 1966
On the problem of hidden variables in quantum mechanics.
\href{https://doi.org/10.1103/RevModPhys.38.447}{{\em Rev. Mod. Phys.} \textbf{38}, 447--452.}

\bibitem{KS67}
Kochen S, Specker EP. 1967
The problem of hidden variables in quantum mechanics.
\href{https://doi.org/10.1512/iumj.1968.17.17004}{{\em J. Math. Mech.} \textbf{17}, 59--87.}


\bibitem{Cabello18}
Cabello A. 2019
Quantum correlations from simple assumptions.
\href{https://doi.org/10.1103/PhysRevA.100.032120}{{\em Phys. Rev. A} \textbf{100}, 032120.}


\bibitem{CY14}
Chiribella G, Yuan X.
Measurement sharpness cuts nonlocality and contextuality in every physical theory.
\href{http://arxiv.org/abs/1404.3348}{arXiv:1404.3348.}


\bibitem{Gisin91}
Gisin N. 1991
Bell's inequality holds for all non-product states.
\href{https://doi.org/10.1016/0375-9601(91)90805-I}{{\em Phys. Lett. A} \textbf{154}, 201--202.}

\bibitem{KCBS08}
Klyachko AA, Can MA, Binicio\u{g}lu S, Shumovsky AS. 2008 
Simple test for hidden variables in spin-1 systems.
\href{https://doi.org/10.1103/PhysRevLett.101.020403}{{\em Phys. Rev. Lett.} \textbf{101}, 020403.}

\bibitem{Cabello08}
Cabello A. 2008
Experimentally testable state-independent quantum contextuality.
\href{https://doi.org/10.1103/PhysRevLett.101.210401}{{\em Phys. Rev. Lett.} \textbf{101}, 210401.}

\bibitem{BBCP09}
Badzi\c{a}g P, Bengtsson I, Cabello A, Pitowsky I. 2009
Universality of state-independent violation of correlation inequalities for noncontextual theories.
\href{https://doi.org/10.1103/PhysRevLett.103.050401}{{\em Phys. Rev. Lett.} \textbf{103}, 050401.}

\bibitem{YO12}
Yu S, Oh CH. 2012
State-independent proof of Kochen-Specker theorem with 13 rays.
\href{https://doi.org/10.1103/PhysRevLett.108.030402}{{\em Phys. Rev. Lett.} \textbf{108}, 030402.}

\bibitem{KBLGC12}
Kleinmann M, Budroni C, Larsson J-\AA, G{\"u}hne O, Cabello A. 2012
Optimal inequalities for state-independent contextuality.
\href{https://doi.org/10.1103/PhysRevLett.109.250402}{{\em Phys. Rev. Lett.} \textbf{109}, 250402.}


\bibitem{Spekkens14}
Spekkens RW. 2014
The status of determinism in proofs of the impossibility of a noncontextual model of quantum theory.
\href{https://doi.org/10.1007/s10701-014-9833-x}{{\em Found. Phys.} \textbf{44}, 1125--1155.}


\bibitem{Peres91}
Peres A. 1991
Two simple proofs of the Kochen-Specker theorem.
\href{https://doi.org/10.1088/0305-4470/24/4/003}{{\em J. Phys. A: Math. Gen.} \textbf{24}, L175--L178.}

\bibitem{CEG96}
Cabello A, Estebaranz JM, Garc\'{\i}a-Alcaine G. 1996
Bell-Kochen-Specker theorem: a proof with 18 vectors.
\href{https://doi.org/10.1016/0375-9601(96)00134-X}{{\em Phys. Lett. A} \textbf{212}, 183--187.}

\bibitem{LBPC14}
Lison\v{e}k P, Badzi\c{a}g P, Portillo JR, Cabello A. 2014
Kochen-Specker set with seven contexts.
\href{https://doi.org/10.1103/PhysRevA.79.012102}{{\em Phys. Rev. A} \textbf{89}, 042101.}


\bibitem{BBC12}
Bengtsson I, Blanchfield K, Cabello A. 2012
A Kochen-Specker inequality from a SIC.
\href{http://doi.org/10.1016/j.physleta.2011.12.011}{{\em Phys. Lett. A} \textbf{376}, 374--376.}


\bibitem{CSW10}
Cabello A, Severini S, Winter A.
(Non-)Contextuality of physical theories as an axiom.
\href{https://arxiv.org/abs/1010.2163}{arXiv:1010.2163.}

\bibitem{CSW14}
Cabello A, Severini S, Winter A. 2014
Graph-theoretic approach to quantum correlations.
\href{https://doi.org/10.1103/PhysRevLett.112.040401}{{\em Phys. Rev. Lett.} \textbf{112}, 040401.}


\bibitem{Tsirelson93}
Tsirelson BS. 1993
Some results and problems on quantum Bell-type inequalities.
{\em Hadronic J. Suppl.} \textbf{8}, 329--345.

\bibitem{BCPSW14}
Brunner N, Cavalcanti D, Pironio S, Scarani V, Wehner S. 2014
Bell nonlocality.
\href{https://doi.org/10.1103/RevModPhys.86.419}{{\em Rev. Mod. Phys.} \textbf{86}, 419--478.}

\bibitem{AB11}	
Abramsky S, Brandenburger A. 2011	
The sheaf-theoretic structure of non-locality and contextuality.	
\href{https://doi.org/10.1088/1367-2630/13/11/113036}{{\em New J. Phys.} \textbf{13}, 113036.}

\bibitem{AH12}
Abramsky S, Hardy L. 2012
Logical Bell inequalities.
\href{https://doi.org/10.1103/PhysRevA.85.062114}{{\em Phys. Rev. A} \textbf{85}, 062114.}

\bibitem{AFLS15}
Ac\'{\i}n A, Fritz T, Leverrier A, Sainz AB. 2015
A combinatorial approach to nonlocality and contextuality.
\href{https://doi.org/10.1007/s00220-014-2260-1}{{\em Commun. Math. Phys.} \textbf{334}, 533--628.}


\bibitem{Neumark40a}
Neumark MA. 1940
On the self-adjoint extensions of the second kind of a symmetric operator.
{\em Izv. Akad. Nauk S.S.S.R. [Bull. Acad. Sci. U.S.S.R.] S\'er. Mat.} \textbf{4}, 53--104 (Russian with English summary).

\bibitem{Neumark40b}
Neumark MA. 1940
Spectral functions of a symmetric operator.
{\em Izv. Akad. Nauk S.S.S.R. [Bull. Acad. Sci. U.S.S.R.] S\'er. Mat.} \textbf{4}, 277--318 (Russian with English summary).

\bibitem{Neumark43}
Neumark MA. 1943
On a representation of additive operator set functions.
{\em C.R. (Dokl.) Acad. Sci. U.R.S.S. (N.S.)} \textbf{41}, 359--361.

\bibitem{Holevo80}
Holevo AS. 2011
{\em Probabilistic and statistical aspects of quantum theory}, p.~55. 
Pisa, Italy: Scuola Normale Superiore Pisa.
First published in Russian in 1980.

\bibitem{Peres95}
Peres A. 1995
{\em Quantum theory: concepts and methods}, p.~285.
New York, NY: Kluwer.


\bibitem{Wigner60}
Wigner EP. 1960 
The unreasonable effectiveness of mathematics in the natural sciences. Richard Courant lecture in mathematical sciences delivered at New York University, May 11, 1959.
\href{https://doi.org/10.1002/cpa.3160130102}{{\em Commun. Pure Appl. Math.} \textbf{13}, 1--14.} 


\bibitem{PR94}
Popescu S, Rohrlich D. 1994
Quantum nonlocality as an axiom.
\href{https://doi.org/10.1007/BF02058098}{{\em Found. Phys.} \textbf{24}, 379--385.}

\bibitem{vanDam99}
van Dam W. 1999
Nonlocality \& communication complexity.
PhD thesis, University of Oxford, Oxford, UK.

\bibitem{PPKSWZ09}
Paw{\l}owski M, Paterek T, Kaszlikowski D, Scarani V, Winter A, \.{Z}ukowski M. 2009
Information causality as a physical principle.
\href{https://doi.org/10.1038/nature08400}{{\em Nature} \textbf{461}, 1101--1104.}

\bibitem{NW09}
Navascu\'es M, Wunderlich H. 2009
A glance beyond the quantum model.
\href{https://doi.org/10.1098/rspa.2009.0453}{{\em Proc. R. Soc. A} \textbf{466}, 881--890.}

\bibitem{Cabello13}
Cabello A. 2013
Simple explanation of the quantum violation of a fundamental inequality.
\href{https://doi.org/10.1103/PhysRevLett.110.060402}{{\em Phys. Rev. Lett.} \textbf{110}, 060402.}

\bibitem{FSABCLA13}
Fritz T, Sainz AB, Augusiak R, Bohr Brask J, Chaves R, Leverrier A, Ac\'{\i}n A. 2013
Local orthogonality as a multipartite principle for quantum correlations.
\href{https://doi.org/10.1038/ncomms3263}{{\em Nat. Commun.} \textbf{4}, 2263.}


\bibitem{NGHA15}
Navascu\'es M, Guryanova Y, Hoban MJ, Ac\'{\i}n A. 2015
Almost quantum correlations.
\href{https://doi.org/10.1038/ncomms7288}{{\em Nat. Commun.} \textbf{6}, 6288.}


\bibitem{BS69}
Barnard ACL, Sallin EA. 1969
Tachyons and tardyons.
\href{https://doi.org/10.1063/1.3035237}{{\em Phys. Today} \textbf{22}, 9.}

\bibitem{Zweig10}
Zweig G. 2010
Memories of Murray and the quark model.
\href{https://doi.org/10.1142/S0217751X10050494}{{\em Int. J. Mod. Phys. A} \textbf{25}, 3863--3877.}

\bibitem{Trigg70}
Trigg GL. 1970
Tachyons revisited.
\href{https://doi.org/10.1063/1.3021812}{{\em Phys. Today} \textbf{23}, 79.}

\bibitem{Weinberg05}
Weinberg S. 2005
Einstein's mistakes.
\href{https://doi.org/10.1063/1.2155755}{{\em Phys. Today} \textbf{58}, 31--35.}

\bibitem{White38}
White TH. 1938
{\em The sword in the stone}.
London, UK: Collins.


\bibitem{Yan13}
Yan B. 2013
Quantum correlations are tightly bound by the exclusivity principle.
\href{https://doi.org/10.1103/PhysRevLett.110.260406}{{\em Phys. Rev. Lett.} \textbf{110}, 260406.}

\bibitem{ATC14}
Amaral B, Terra Cunha M, Cabello A. 2014
Exclusivity principle forbids sets of correlations larger than the quantum set.
\href{https://doi.org/10.1103/PhysRevA.89.030101}{{\em Phys. Rev. A} \textbf{89}, 030101(R).}

\bibitem{Cabello15}
Cabello A. 2015
Simple explanation of the quantum limits of genuine $n$-body nonlocality.
\href{https://doi.org/10.1103/PhysRevLett.114.220402}{{\em Phys. Rev. Lett.} \textbf{114}, 220402.}

\bibitem{Henson15}
Henson J. 2015
Bounding quantum contextuality with lack of third-order interference.
\href{https://doi.org/10.1103/PhysRevLett.114.220403}{{\em Phys. Rev. Lett.} \textbf{114}, 220403.}


\bibitem{GLS88}
Gr\"otschel M, Lov\'asz L, Schrijver A. 1988
{\em Geometric algorithms and combinatorial optimization}.
Berlin, Germany: Springer.

\bibitem{Knuth94}
Knuth DE. 1994
The sandwich theorem.
\href{http://www.combinatorics.org/ojs/index.php/eljc/article/view/v1i1a1}{{\em Electron. J. Combin.} \textbf{1}, A1.}

\bibitem{GLS86}
Gr\"otschel M, Lov\'asz L, Schrijver A. 1986
Relaxations of vertex packing.
\href{https://doi.org/10.1016/0095-8956(86)90087-0}{{\em J. Combin. Theory B} \textbf{40}, 330--343.}


\bibitem{Born26}
Born M. 1926
Zur Quantenmechanik der Sto{\ss}vorg\"ange.
\href{https://doi.org/10.1007/BF01397477}{{\em Z. Physik} \textbf{37}, 863--867.} 
 

\bibitem{LSW11}
Liang Y-C, Spekkens RW, Wiseman HM. 2011
Specker's parable of the overprotective seer: a road to contextuality, nonlocality and complementarity.
\href{https://doi.org/10.1016/j.physrep.2011.05.001}{{\em Phys. Rep.} \textbf{506}, 1--39.}

\bibitem{LSW17}
Liang Y-C, Spekkens RW, Wiseman HM. 2017
Erratum to `Specker's parable of the overprotective seer: a road to contextuality, nonlocality and complementarity' [Phys. Rep. 506 (2011) 1--39].
\href{https://doi.org/10.1016/j.physrep.2016.12.001}{{\em Phys. Rep.} \textbf{666}, 110--111.}

\bibitem{Wright78}
Wright R. 1978
In \emph{Mathematical foundations of quantum mechanics} (ed. AR Marlow), p.~255.
San Diego, CA: Academic Press.


\bibitem{Rastall85}
Rastall P. 1985
Locality, Bell's theorem, and quantum mechanics.
\href{https://doi.org/10.1007/BF00739036}{{\em Found. Phys.} \textbf{15}, 963--972.}

\bibitem{KC85}
Khalfin LA, Tsirelson BS. 1985
In
{\em Symposium on the foundations of modern physics: 50 years of the Einstein-Podolsky-Rosen experiment}
(eds PJ Lahti, P Mittelstaedt), p.~441.
Singapore: World Scientific. 


\bibitem{Stueckelberg59}
Stueckelberg ECG. 1959 
Field quantisation and time reversal in real Hilbert space.
{\em Helv. Phys. Acta} \textbf{32}, 254--256.

\bibitem{Stueckelberg60}
Stueckelberg ECG. 1960
Quantum theory in real Hilbert space.
{\em Helv. Phys. Acta} \textbf{33}, 727--752.

\bibitem{MMG09}
McKague M, Mosca M, Gisin N. 2009
Simulating quantum systems using real Hilbert spaces.
\href{https://doi.org/10.1103/PhysRevLett.102.020505}{{\em Phys. Rev. Lett.} \textbf{102}, 020505.}

\bibitem{ABW13}
Aleksandrova A, Borish V, Wootters WK. 2013
Real-vector-space quantum theory with a universal quantum bit.
\href{https://doi.org/10.1103/PhysRevA.87.052106}{{\em Phys. Rev. A} \textbf{87}, 052106.}


\bibitem{Hardy01}
Hardy L.
Quantum theory from five reasonable axioms.
\href{http://arxiv.org/abs/quant-ph/0101012}{quant-ph/0101012.}

\bibitem{MM11} 
Masanes L, M\"uller MP. 2011 
A derivation of quantum theory from physical requirements. 
\href{https://doi.org/10.1088/1367-2630/13/6/063001}{{\em New J. Phys.} \textbf{13}, 063001.}

\bibitem{CDP11}
Chiribella G, D'Ariano GM, Perinotti P. 2011
Informational derivation of quantum theory.
\href{https://doi.org/10.1103/PhysRevA.84.012311}{{\em Phys. Rev. A} \textbf{84}, 012311.}

\bibitem{BMU14}
Barnum H, M\"uller MP, Ududec C. 2014
Higher-order interference and single-system postulates characterizing quantum theory.
\href{https://doi.org/10.1088/1367-2630/16/12/123029}{{\em New J. Phys.} \textbf{16}, 123029.}


\bibitem{Wilce17}
Wilce A. 2017
Quantum logic and probability theory.
In
\href{https://plato.stanford.edu/archives/spr2017/entries/qt-quantlog/}{{\em The Stanford encyclopedia of philosophy} (ed. EN Zalta).}
Stanford, CA: Metaphysics Research Lab.

\bibitem{Mackey63}
Mackey GW. 1963
{\em The mathematical foundations of quantum theory}.
New York, NY: W. A. Benjamin.

\bibitem{Wilce00}
Wilce A. 2000 
In
\href{https://doi.org/10.1007/978-94-017-1201-9_4}{{\em Current research in operational quantum logic}
	(eds B Coecke, D Moore, A Wilce), p.~81. Amsterdam, Netherlands: Springer.}

\bibitem{RF79}
Randall CH, Foulis DJ. 1979
Tensor products of quantum logics do not exist. 
{\em Notices Am. Math. Soc.} \textbf{26}, A-557.

\bibitem{FR81}
Foulis DJ, Randall, CH. 1981
In {\em Interpretations and foundations of quantum mechanics}
(ed. H Neumann), p.~9. Mannheim, Germany: Wissenchaftsverlag. 


\bibitem{CCKM19} 
Chiribella G, Cabello A, Kleinmann M, M\"uller, MP.
General Bayesian theories and the emergence of the exclusivity principle.
\href{https://arxiv.org/abs/1901.11412}{arXiv:1901.11412.}


\bibitem{Bohm52a}
Bohm D. 1952
A suggested interpretation of the quantum theory in terms of `hidden' variables. I.
\href{https://doi.org/10.1103/PhysRev.85.166}{{\em Phys. Rev.} \textbf{85}, 166--179.}

\bibitem{Bohm52b}
Bohm D. 1952
A suggested interpretation of the quantum theory in terms of `hidden' variables. II.
\href{https://doi.org/10.1103/PhysRev.85.180}{{\em Phys. Rev.} \textbf{85}, 180--193.}

\bibitem{Bohm53}
Bohm D. 1953
Proof that probability density approaches $|\psi|^2$ in causal interpretation of quantum theory.
\href{https://doi.org/10.1103/PhysRev.89.458}{{\em Phys. Rev.} \textbf{89}, 458--466.}


\bibitem{Everett57}
Everett III H. 1957
`Relative state' formulation of quantum mechanics.
\href{https://doi.org/10.1103/RevModPhys.29.454}{{\em Rev. Mod. Phys.} \textbf{29}, 454--462.}

\bibitem{GH90}
Gell-Mann M, Hartle JB. 1990
In {\em Entropy, and the physics of information} (ed. WH Zurek). SFI Studies in the Sciences of Complexity, vol.~VIII p.~425. Reading, MA: Addison-Wesley. 


\bibitem{Finkelstein65}
Finkelstein D. 1965 
Section of physical sciences: the logic of quantum physics.
\href{https://doi.org/10.1111/j.2164-0947.1963.tb01483.x}{{\em Trans. NY Acad. Sci.} \textbf{25}, 621--637.}

\bibitem{Hartle68}
Hartle JB. 1968
Quantum mechanics of individual systems.
\href{https://doi.org/10.1119/1.1975096}{{\em Am. J. Phys.} \textbf{36}, 704--712.}

\bibitem{FGG89}
Farhi E, Goldstone J, Gutmann S. 1989 
How probability arises in quantum mechanics.
\href{https://doi.org/10.1016/0003-4916(89)90141-3}{{\em Ann. Phys.} \textbf{192}, 368--382.}

\bibitem{Deutsch99}
Deutsch D. 1999
Quantum theory of probability and decisions.
\href{https://doi.org/10.1098/rspa.1999.0443}{{\em Proc. R. Soc. Lond. A} \textbf{455}, 3129--3137.}

\bibitem{Wallace03}
Wallace D. 2003 
Everettian rationality: defending Deutsch's approach to probability in the Everett interpretation.
\href{https://doi.org/10.1016/S1355-2198(03)00036-4}{{\em Stud. Hist. Phil. Sci. B} \textbf{34}, 415--439.}

\bibitem{Wallace07}
Wallace D. 2007 
Quantum probability from subjective likelihood: improving on Deutsch's proof of the probability rule.
\href{https://doi.org/10.1016/j.shpsb.2006.04.008}{{\em Stud. Hist. Phil. Sci. B} \textbf{38}, 311--332.}

\bibitem{Zurek03}
Zurek WH. 2003 
Environment-assisted invariance, entanglement, and probabilities in quantum physics.
\href{https://doi.org/10.1103/PhysRevLett.90.120404}{{\em Phys. Rev. Lett.} \textbf{90}, 120404.}

\bibitem{Zurek05}
Zurek WH. 2005 
Probabilities from entanglement, Born's rule $p_k=|\psi_k|^2$ from envariance.
\href{https://doi.org/10.1103/PhysRevA.71.052105}{{\em Phys. Rev. A} \textbf{71}, 052105.}

\bibitem{Vaidman11}
Vaidman L. 2012 
In 
\href{https://doi.org/10.1007/978-3-642-21329-8_18}{{\em Probability in physics} (eds Y Ben-Menahem, M Hemmo ), p.~299. Berlin, Germany: Springer.}


\bibitem{Squires90}
Squires EJ. 1990
On an alleged `proof' of the quantum probability law.
\href{https://doi.org/10.1016/0375-9601(90)90192-Q}{{\em Phys. Lett. A} \textbf{145}, 67--68.}

\bibitem{CS96}
Cassinello A, S\'anchez-G\'omez JL. 1996
On the probabilistic postulate of quantum mechanics.
\href{https://doi.org/10.1007/BF02058273}{{\em Found. Phys.} \textbf{26}, 1357--1374.}

\bibitem{BCFFS00}
Barnum H, Caves CM, Finkelstein J, Fuchs CA, Schack R. 2000
Quantum probability from decision theory?
\href{https://doi.org/10.1098/rspa.2000.0557}{{\em Proc. R. Soc. Lond. A} \textbf{456}, 1175--1182.}

\bibitem{CS05}
Caves CM, Schack R. 2005
Properties of the frequency operator do not imply the quantum probability postulate.
\href{https://doi.org/10.1016/j.aop.2004.09.009}{{\em Ann. Phys.} \textbf{315}, 123--146.}

\bibitem{SF05}
Schlosshauer M, Fine, A. 2005
On Zurek's derivation of the Born rule.
\href{https://doi.org/10.1007/s10701-004-1941-6}{{\em Found. Phys.} \textbf{35}, 197--213.}


\bibitem{Gleason57}
Gleason, AM. 1957
Measures on the closed subspaces of a Hilbert space.
\href{https://doi.org/10.1512/iumj.1957.6.56050}{{\em J. Math. Mech.} \textbf{6}, 885--893.}


\bibitem{FS13}
Fuchs CA, Schack R. 2013
Quantum-Bayesian coherence.
\href{https://doi.org/10.1103/RevModPhys.85.1693}{{\em Rev. Mod. Phys.} \textbf{85}, 1693--1715.}


\bibitem{Zurek91}
Zurek WH. 1991
Decoherence and the transition from quantum to classical.
\href{https://doi.org/10.1063/1.881293}{{\em Phys. Today} \textbf{44}, 36--44.} 

\bibitem{KB07}
Kofler J, Brukner \v{C}. 2007
Classical world arising out of quantum physics under the restriction of coarse-grained measurements.
\href{https://doi.org/10.1103/PhysRevLett.99.180403}{{\em Phys. Rev. Lett.} \textbf{99}, 180403.}

\end{thebibliography}
\end{document}